\documentclass[12pt,a4paper]{article}
\usepackage[utf8]{inputenc}
\usepackage[english]{babel}
\usepackage{amsmath, amssymb, amsfonts}
\usepackage{amsthm}
\usepackage{physics}
\usepackage[outline]{contour}
\usepackage{graphicx}
\usepackage{cite}
\usepackage[colorlinks=true, linkcolor=blue, citecolor=blue, urlcolor=blue]{hyperref}
\usepackage{tikz}
\usepackage{geometry}
\usepackage{dsfont}
\usepackage{booktabs}
\usepackage{float}
\usepackage{cleveref}
\usepackage{authblk}
\usepackage{siunitx}
\theoremstyle{plain}
\newtheorem{theorem}{Theorem}[section]

\newtheorem{proposition}[theorem]{Proposition}
\theoremstyle{definition}
\newtheorem{definition}[theorem]{Definition}
\newtheorem{remark}[theorem]{Remark}
\newtheorem{example}[theorem]{Example}
\title{\textbf{Metric-Deformed Heisenberg Algebras and the $q$-Dirac Operator}}
\author{Julio Cesar Jaramillo Quiceno$^a$\\		
$^a$jcjaramilloq@unal.edu.co\\Universidad Nacional de Colombia - Departament of Physics - Faculty of mathematics\\
Ed. Yu Takeuchi \\\url{https://orcid.org/0000-0002-3518-6680}}
\theoremstyle{remark}
\date{}
\AtBeginDocument{\RenewCommandCopy\qty\SI}
\begin{document}
\maketitle
\begin{abstract}
We introduce a family of metric-deformed Heisenberg algebras $M_1$ and $M_2$, where the commutation relations are expressed directly in terms of the components of a diagonal Lorentzian metric. We show that these algebras unify several known $q$-deformed Heisenberg algebras, including the $q$-$\hbar$ algebra, the new $q$-Heisenberg algebra, and the $q$-generalized Heisenberg algebra, which embed as special cases. Using Sylvester's theorem of inertia, we establish a connection between the metric signature and the deformation parameters. We construct a $q$-Dirac operator $D_q$ from the deformed D'Alembertian and prove that $D_q^2$ recovers the deformed Klein–Gordon operator. Furthermore, we relate this construction to the quadratic $q$-Dirac operator previously introduced by the author, providing a unified framework that bridges spacetime geometry and $q$-deformed quantum algebras.
\end{abstract}

\textbf{Keywords:} $q$-Deformed Heisenberg Algebras, $q$-Dirac operator, metric Heisenberg algebras
\section{Introduction}\label{sec:Intro}

The interplay between quantum mechanics and geometry has been a central theme in theoretical physics for decades. On one hand, the Heisenberg algebra $[x,p]=i\hbar$ encodes the fundamental non-commutativity of position and momentum, forming the backbone of canonical quantization. On the other hand, general relativity describes gravity through the geometry of spacetime encoded in the metric tensor $g_{\mu\nu}$, while special relativity imposes the invariance of the Minkowski norm.

In recent years, $q$-deformed algebras have emerged as a powerful tool to model quantum corrections to classical geometries. The $q$-Heisenberg algebra, defined by $aa^\dagger - q a^\dagger a = q^{-N}$, generalizes the harmonic oscillator algebra and appears in contexts ranging from quantum groups \cite{Wess-00} to noncommutative geometry \cite{schmudgen-09}, $q$-deformed quantum mechanics \cite{lavagno-gervino-09}, and quantum field theory \cite{Volovich-Arefeva-91}. Despite these developments, a direct connection between the deformation parameter $q$ and the geometric structure of spacetime has remained elusive.

In this work, we propose a novel framework that bridges these two perspectives. Motivated by Sylvester's theorem of inertia, which guarantees that any Lorentzian metric can be diagonalized to constants $\pm 1$ after a suitable coordinate transformation, we define metric-deformed Heisenberg algebras $M_1$ and $M_2$ where the commutation relations are expressed directly in terms of the metric components $g^{\mu\nu}$. This construction provides a geometric interpretation of $q$-deformations: the deformation parameter $q$ arises naturally from the metric coefficients.

Our main contributions are:
\begin{itemize}
    \item The definition of two families of metric-deformed Heisenberg algebras $M_1$ and $M_2$ (Section \ref{sec:al-met}).
    \item Explicit embeddings of known $q$-deformed algebras (new $q$-Heisenberg \cite{jaramillo2025new}, $q$-generalized Heisenberg \cite{Razavinia-Lopes2022}, and $q$-$\hbar$ Heisenberg \cite{Volovich-Arefeva-91}) into $M_1$ and $M_2$, demonstrating that our framework unifies these seemingly distinct deformations (Section \ref{examples}).
    \item The construction of a $q$-Dirac operator $D_q$ from the deformed D'Alembertian and the proof that $D_q^2$ factorizes the deformed Klein–Gordon operator (Section \ref{Dirac}).
    \item A connection with the quadratic $q$-Dirac operator introduced in our previous work \cite{Jaramillo2024}, providing a unified analytic framework (Section \ref{sec:examples}).
\end{itemize}

The paper is structured as follows. Section \ref{sec:Preliminaries} recalls the necessary background on $q$-Heisenberg algebras and introduces the notation for $q$-numbers. Section \ref{sec:metric} reviews the Minkowski metric and Sylvester's theorem of inertia, which motivates our construction. Section \ref{sec:al-met} presents the metric-deformed Heisenberg algebras $M_1$ and $M_2$, along with concrete examples showing how known $q$-deformed algebras embed into this framework. Section \ref{Dirac} constructs the $q$-Dirac operator from the deformed D'Alembertian and proves the factorization property $D^2 = \square_q \mathbf{1}_4$. Section \ref{sec:examples} connects this construction to the quadratic $q$-Dirac operator from \cite{Jaramillo2024} and presents explicit realizations. Section \ref{sec:conclusions} concludes with a summary and outlines directions for future research.

This work opens the door to a geometric interpretation of $q$-deformations, where the deformation parameter $q$ is directly related to the metric components. Potential applications include $q$-deformed quantum field theory, noncommutative geometry \cite{connes1994noncommutative}, and quantum gravity phenomenology.
\section{Preliminaries}\label{sec:Preliminaries}
\subsection{$q$-Heisenberg algebras}\label{algebras}
According to the Heisenberg quantum mechanical picture, the three position and impulse operators beside the identity one, are the base of so-called Heisenberg (traditional) algebra. The \( q \)-deformed Heisenberg algebra or \( q \)-Heisenberg algebra has been defined and in detail studied by various authors such as \cite{Jar-Nech-25, Wess-00, schmudgen-09, wess-schwenk-1999, Silvestrov-Hellstrom-00, lavagno-gervino-09}. The \( q \)-deformed Heisenberg algebra is determined by commutations between the coordinates of any by the impulses of the different projections and the following commutation relations for the different type of magnitudes corresponding to the same projections (for \( q \in \mathbb{R} \)):

\[
[\hat{x}, \hat{p}_x] = [\hat{y}, \hat{p}_y] = [\hat{z}, \hat{p}_z] = i f(q).
\]

Now, if \( q \in \{-1, +1\} \) and \( f(q) = -i \hbar^{-1} \), then this definition can be obviously extended to the definition of the generalized \( q \)-Heisenberg algebra as shown in \cite{Jar-Nech-25, Silvestrov-Hellstrom-00}; if \( q \in \mathbb{C} - \{0\} \), and \( f(q) = i \hbar D_{p_k}(q) \), being \( D_{p_k}(q) \) any function that depends on \( q \), results the \( q-\hbar \) Heisenberg algebra proposed in \cite{Volovich-Arefeva-91}. In the last case, two deformation parameters are used: \( q \) and \( \hbar \) i.e. the Planck "constant" is a variable. 

With respect to the harmonic oscillator, let \( \hat{a}, \hat{a}^\dagger \) be annihilation and creation operators and \( N \) the quantum number operator. The \( q \)-Heisenberg algebra is defined by the generators \( \hat{a}, \hat{a}^\dagger \) and \( \hat{N} \) subject to the following relations \cite{Jar-Nech-25, lavagno-gervino-09}:

\[
\hat{a}\hat{a}^\dagger - q\hat{a}^\dagger \hat{a} = q^{-\hat{N}}, \quad [\hat{a}, \hat{a}^\dagger] = [\hat{a}^\dagger, \hat{a}^\dagger] = 0, \quad [\hat{N}, \hat{a}^\dagger] = \hat{a}^\dagger, \quad [\hat{N}, \hat{a}] = -\hat{a}.
\]

We introduce the following notation:

\[
[n]_q = \frac{q^n - q^{-n}}{q - q^{-1}},
\]

and the corresponding $q$-factorial

\[
[n]_q! = [n]_q [n-1]_q \cdots [1]_q,
\]

where $q \in \mathbb{R} \setminus \{0\}$. We set $[0]_q = 0$ and note that $[1]_q = 1$, while $[n]_q \neq n$ for all $n \geq 2$. This notation provides a convenient framework for expressing $q$-deformed operators and their properties. In what follows, all operators are assumed to act on appropriate vector spaces \cite{Jar-Nech-25}.

\section{Metric Structure, Sylvester's Theorem, and Deformed Heisenberg Algebra}\label{sec:metric}

\subsection{The Minkowski Metric}

In the context of general and special relativity, we consider a manifold \(\mathcal{M}\) equipped with a metric tensor \(\eta\). If \((\mathcal{M}, \eta)\) is three-times differentiable, then \(\eta\) defines a Lorentzian manifold, and \(\mathcal{M}\) is a Minkowski spacetime~\cite{wald1984general, hawking1973large}. In a local coordinate system, the Minkowski metric is expressed as:
\begin{equation}\label{eq: mink1}
\eta = \eta_{00} \, dx^0 \otimes dx^0 + \eta_{11} \, dx^1 \otimes dx^1 + \eta_{22} \, dx^2 \otimes dx^2 + \eta_{33} \, dx^3 \otimes dx^3,
\end{equation}
with the components given by:
\begin{equation}\label{eq: mink2}
\eta_{00} = +1, \quad \eta_{11} = -1, \quad \eta_{22} = -1, \quad \eta_{33} = -1.
\end{equation}
This corresponds to the signature \((+1, -1, -1, -1)\)~\cite{misner1973gravitation}.

\subsection{Quadratic Form and Sylvester's Theorem of Inertia}

More generally, consider a metric tensor \(g\) at a point \(P \in \mathcal{M}\), which is an element of the tensor product \(T_P^* \mathcal{M} \otimes T_P^* \mathcal{M}\). In a local coordinate system, we can write a general quadratic form as:
\begin{equation}\label{eq: metric1}
g = |\alpha| \, dx^0 \otimes dx^0 - |\beta| \, dx^1 \otimes dx^1 - |\gamma| \, dy \otimes dy - |\delta| \, dz \otimes dz,
\end{equation}
where \(\alpha, \beta, \gamma, \delta\) are real coefficients. The equivalence of this quadratic form to a diagonal form with entries \(\pm 1\) is a direct consequence of \textbf{Sylvester's theorem of inertia}. This fundamental result in linear algebra, proved by James Joseph Sylvester in 1852~\cite{sylvester1852demonstration}, states that for any real quadratic form on a finite-dimensional vector space, there exists a basis in which the form is diagonal with entries \(+1\), \(-1\), or \(0\). Moreover, the numbers of positive, negative, and zero entries known as the \textit{signature} are invariant under changes of basis. This theorem is essential in the classification of metric signatures in relativity theory~\cite{greub1978linear, geroch1985mathematical}.

\section{From $q$-deformed Heisenberg algebra to $q$-deformed Minkowskian metric}\label{sec:al-met}

\subsection{Deformed Heisenberg Algebra in Terms of Metric Components}

\begin{definition}[Metric-deformed Heisenberg algebras]
The previous example illustrates how the metric coefficients 
\(\beta, \gamma, \delta\) (or equivalently \(g^{11}, g^{22}, g^{33}\)) 
deform the canonical commutation relations. A more general and systematic 
framework arises when one considers a fully non-commutative structure 
incorporating all components of the metric tensor, including off-diagonal 
terms and the temporal component.

\medskip

\noindent Let $\Bbbk$ be a field of characteristic zero (typically $\mathbb{R}$ or $\mathbb{C}$). The \textbf{metric-deformed Heisenberg algebra of first kind}, 
denoted by \(M_{1}\), is defined as the quotient algebra
\begin{equation}\label{eq:first}
M_1=\frac{\Bbbk [x, y, z, p_x, p_y, p_z]}{\mathcal{I}_1},
\end{equation}
where the ideal \(\mathcal{I}_1\) is generated by
\begin{multline}\label{eq:ideal relation}
\langle 
g^{00}x p_x + p_x x g^{11}+i g^{22},\;
g^{00}y p_y + p_y y g^{22}+i g^{33},\;
g^{00}z p_z + p_z z g^{33}+i g^{11},\\
g^{22}xy-g^{33}yx-g^{23},\;
g^{33}yz-g^{11}zy-g^{13},\;
g^{11}zx-g^{22}xz-g^{12},\\
g^{22}p_xp_y-g^{33}p_yp_x-g^{23},\;
g^{33}p_yp_z-g^{11}p_zp_y-g^{13},\;
g^{11}p_zp_x-g^{22}p_xp_z-g^{12},\\
\mid g^{\mu\nu}\in\mathbb{R},\; \mu,\nu=0,1,2,3
\rangle .
\end{multline}

The corresponding deformed commutation relations are
\begin{equation}\label{eq:relations2}
\begin{aligned}
g^{00} x p_x + p_x x g^{11} &= -i g^{22}, &
g^{00} y p_y + p_y y g^{22} &= -i g^{33}, \\
g^{00} z p_z + p_z z g^{33} &= -i g^{11}, &
g^{22} x y - g^{33} y x     &= g^{23}, \\
g^{33} y z - g^{11} z y     &= g^{13}, &
g^{11} z x - g^{22} x z     &= g^{12}, \\
g^{22} p_x p_y - g^{33} p_y p_x &= g^{23}, &
g^{33} p_y p_z - g^{11} p_z p_y &= g^{13}, \\
g^{11} p_z p_x - g^{22} p_x p_z &= g^{12}.
\end{aligned}
\end{equation}

\medskip

\noindent The \textbf{metric-deformed Heisenberg algebra of second kind}, 
denoted by \(M_{2}\), is defined analogously by
\begin{equation}\label{eq:second}
M_2=\frac{\Bbbk [x, y, z, p_x, p_y, p_z]}{\mathcal{I}_2},
\end{equation}
where the ideal \(\mathcal{I}_2\) is generated by
\begin{multline}\label{eq:ideal relation2}
\langle 
g^{00}x p_y + p_y x g^{33}-i g^{03},\;
g^{00}x p_z + p_z x g^{22}-i g^{02},\\
g^{00}y p_x + p_x y g^{33}-i g^{03},\;
g^{00}y p_z + p_z y g^{11}+i g^{01},\\ 
g^{00}z p_x+g^{22}p_x z+i g^{02},\;
g^{00}z p_y + p_y z g^{11}+i g^{01},\\
g^{00}x y+g^{33}y x-g^{03},\;
g^{00}y z+g^{11}z y-g^{01},\\
g^{22}z x+g^{00}x z-g^{02},\;
g^{00}p_xp_y+g^{33}p_yp_x-g^{03},\\
g^{00}p_yp_z+g^{11}p_zp_y-g^{01},\;
g^{00}p_zp_x+g^{22}p_xp_z-g^{02},\\
\mid g^{\mu\nu}\in\mathbb{R},\; \mu,\nu=0,1,2,3
\rangle .
\end{multline}

The corresponding commutation relations are
\begin{align}\label{eq:comm-relation2}
g^{00} x p_y + p_y x g^{33} &= i g^{03}, &
g^{00} x p_z + p_z x g^{22} &= i g^{02}, \notag \\
g^{00} y p_x + p_x y g^{33} &= i g^{03}, &
g^{00} y p_z + p_z y g^{11} &= i g^{01}, \notag \\
g^{00} z p_x + g^{22} p_x z &= i g^{02}, &
g^{00} z p_y + p_y z g^{11} &= i g^{01}, \notag \\
g^{00} x y + g^{33} y x &= g^{03}, &
g^{00} y z + g^{11} z y &= g^{01}, \notag \\
g^{22} z x + g^{00} x z &= g^{02}, &
g^{00} p_x p_y + g^{33} p_y p_x &= g^{03}, \notag \\
g^{00} p_y p_z + g^{11} p_z p_y &= g^{01}, &
g^{00} p_z p_x + g^{22} p_x p_z &= g^{02}.
\end{align}

\medskip

\noindent In both cases, the metric components \(g^{\mu\nu}\) are treated as 
non-central elements, and their non-commutativity encodes the underlying 
geometric deformation of the phase space structure.
\end{definition}

\begin{theorem}[PBW basis for $M_1$ and $M_2$]
Let $\Bbbk$ be a field of characteristic zero. The metric-deformed Heisenberg algebras $M_1$ and $M_2$ admit a Poincaré-Birkhoff-Witt \cite{Fajardo-2020} basis of the form
\[
\{ x^{a} y^{b} z^{c} p_x^{d} p_y^{e} p_z^{f} : a,b,c,d,e,f \in \mathbb{N}_0 \},
\]
provided the metric components $g^{\mu\nu}$ satisfy $g^{00} \neq 0$ and $g^{ii} \neq 0$ for $i=1,2,3$. 
\end{theorem}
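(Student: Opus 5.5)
The plan is to read $M_1$ and $M_2$ as quotients of the \emph{free} algebra $\Bbbk\langle x,y,z,p_x,p_y,p_z\rangle$ (the commutative polynomial ring in \eqref{eq:first} cannot be meant literally, since there the relations would collapse to scalar identities). I would then prove the PBW property with Bergman's diamond lemma, or equivalently by exhibiting each $M_i$ as a skew PBW extension in the sense of \cite{Fajardo-2020}.

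First, fix the order $x<y<z<p_x<p_y<p_z$ and the deg-lex ordering on words. Use the hypotheses $g^{00}\neq0$ and $g^{ii}\neq0$ to solve each defining relation for the product of two generators written in decreasing order. For $M_1$ this gives rules of the form
\[
p_x x \mapsto -\tfrac{g^{00}}{g^{11}}\,x p_x - \tfrac{i g^{22}}{g^{11}},\qquad
y x \mapsto \tfrac{g^{22}}{g^{33}}\,xy - \tfrac{g^{23}}{g^{33}},
\]
and analogously for the other pairs. For $M_2$ the rules come from \eqref{eq:comm-relation2}, again dividing by whichever diagonal coefficient multiplies the descending word. Each rule has the skew-PBW shape
\[
u_j u_i = c_{ij}\, u_i u_j + (\text{constant}), \qquad c_{ij}\in\Bbbk^{\times}.
\]
Nonzero diagonal entries are exactly what make every $c_{ij}$ invertible, so the rewriting system is well defined and reduces every word to a $\Bbbk$-combination of ordered monomials $x^a y^b z^c p_x^d p_y^e p_z^f$. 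This already shows that these monomials \emph{span} $M_i$.

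Second, to get linear independence, the diamond lemma requires every overlap ambiguity $u_k u_j u_i$ with $k>j>i$ to be resolvable. I would compute both reductions of each triple. Since all $c_{ij}$ are scalars and all lower-order terms are constants, the two results agree in the cubic part automatically. They differ only in the linear part, where the constant terms produce coefficients like $\bigl(c_{jk}-1\bigr)\,\mathrm{const}_{ij}\,u_k$. So resolvability reduces to a finite list of polynomial identities in the $g^{\mu\nu}$. I would check these case by case, or impose them as the natural compatibility conditions, which I expect to hold automatically in the diagonal case $g^{\mu\nu}=0$ for $\mu\neq\nu$ used in the paper's applications.

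The main obstacle is completeness of the rewriting system. In $M_1$ the ideal \eqref{eq:ideal relation} contains no relation between $x$ and $p_y$, $x$ and $p_z$, $y$ and $p_x$, and so on. Likewise, $M_2$ lacks the relations pairing $x$ with $p_x$. Without a rule for $p_y x$, the word $p_y x$ is already irreducible, and the stated monomials cannot span. I would first try to resolve this by supplementing each $M_i$ with the standard relations $u_j u_i = u_i u_j$ for the missing pairs, the implicit convention of canonical phase space. After adding those rules I would recheck the overlaps, in particular triples such as $p_y\,y\,x$, where a commuting pair meets a deformed one and forces a constraint. If such overlaps fail, the honest conclusion is that the theorem holds under these additional commutation conventions together with the compatibility conditions found in the previous step. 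The PBW basis then follows from the diamond lemma, or from the skew PBW criterion of \cite{Fajardo-2020}.
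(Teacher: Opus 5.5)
Your strategy is the paper's: order the generators, turn the relations into rewriting rules, and apply Bergman's diamond lemma. The paper, however, only asserts that every ambiguity resolves, so it offers no support at the critical step. It also claims $\operatorname{gr}(M_1)$ is a commutative polynomial ring, whereas the top part of the first relation is the skew relation $g^{00}xp_x+g^{11}p_xx$, which is commutative only when $g^{00}=-g^{11}$. Your preliminary diagnosis is correct and sharper than the paper's. The quotient must be taken in the free algebra, and $\mathcal I_1$ has no relations for the cross pairs, so without extra relations the ordered monomials do not span. A clean witness uses the Minkowski values and the ordinary Weyl algebra $A_3$ (with $[u,p_u]=i$ and all other pairs commuting). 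Sending $p_y\mapsto p_y$, and alternatively $p_y\mapsto p_y+ip_x$ (other generators fixed), both define homomorphisms $M_1\to A_3$. The first forces any ordered-monomial expression of $p_yx$ to be exactly $xp_y$. The second then gives $1=0$.

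The genuine gap is the closing step. You add plain commutation for the missing pairs and expect the overlap conditions to be mild, even automatic in the diagonal case. For $M_1$ this is false. The constant terms of the position--momentum rules are $-ig^{22}/g^{11}$, $-ig^{33}/g^{22}$ and $-ig^{11}/g^{33}$, which are nonzero precisely by your hypothesis. Take the triple you flag yourself, with $g^{\mu\nu}=0$ for $\mu\neq\nu$ and $p_yx=xp_y$. The two reductions of $p_y\,y\,x$ differ by $-i(g^{33}-g^{22})\,x/g^{22}$. So if $g^{22}\neq g^{33}$, then $x=0$ in your enlarged quotient, the $x$--$p_x$ relation gives $ig^{22}=0$, and the algebra is zero. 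Likewise $p_z\,z\,y$ forces $g^{11}=g^{33}$.

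Under your convention the PBW property therefore holds only when $g^{11}=g^{22}=g^{33}$. The algebra collapses, for instance, for the paper's ``distinguished simple case'' $g^{00}=1$, $g^{11}=-q^n$, $g^{22}=-q$, $g^{33}=-1$ with $q\neq1$. The hypotheses $g^{00},g^{ii}\neq0$ only make the rules well defined; they cannot suffice. Since the statement as written is false, your argument must end in a corrected statement, and you should commit to one rather than leave the outcome conditional.

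The repair that works is skew commutation, with coefficients forced by the overlaps:
$g^{22}p_yx=g^{33}xp_y$, $g^{33}p_xy=g^{22}yp_x$, $g^{22}p_zx=g^{11}xp_z$, $g^{11}p_xz=g^{22}zp_x$, $g^{33}p_zy=g^{11}yp_z$, and $g^{11}p_yz=g^{33}zp_y$.
With these, all twelve triples $\{u,p_u,w\}$ resolve. Hence in the diagonal case the ordered monomials form a basis for all nonzero $g^{00},g^{ii}$. Nonzero $g^{12},g^{13},g^{23}$ then impose further polynomial constraints, for example $g^{23}(g^{11}g^{22}+g^{00}g^{33})=0$ from the word $p_yp_xx$. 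For $M_2$ your diagonal-case expectation is fine: all its constant terms are the $g^{0i}$, so it becomes a quantum affine space.
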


\begin{proof}
We construct a filtration on $M_1$ (resp. $M_2$) by total degree in the generators. The associated graded algebra $\operatorname{gr}(M_1)$ is isomorphic to a commutative polynomial algebra in six variables when the metric components are generic. This follows from the diamond lemma for noncommutative algebras, since all ambiguities in the reduction system are resolvable. The explicit verification involves checking that the relations (\ref{eq:relations2}) (resp. (\ref{eq:comm-relation2})) are confluent. 
\end{proof}

\begin{theorem}[Representations of $M_1$ and $M_2$]
Let $V$ be a representation of $M_1$ (resp. $M_2$) where the generators $x,y,z$ act diagonally. Then the representation decomposes as a direct sum of highest-weight modules. Moreover, when $g^{00}=1$, $g^{11}=g^{22}=g^{33}=-1$, the algebra $M_1$ reduces to the classical Heisenberg algebra and the representation theory coincides with the Stone–von Neumann theorem \cite{Gomez-2025}.
\end{theorem}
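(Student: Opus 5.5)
The plan is to work weight-space by weight-space, using the relations \eqref{eq:relations2} (resp.\ \eqref{eq:comm-relation2}) to see how the momenta move between simultaneous eigenspaces of $x,y,z$. Throughout, \emph{act diagonally} will mean that $x,y,z$ are simultaneously diagonalizable on $V$.

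\textbf{Step 1: weight decomposition.} Write $V=\bigoplus_{\lambda} V_\lambda$ with $\lambda=(\lambda_x,\lambda_y,\lambda_z)\in\Bbbk^3$ the joint eigenvalues. The position relations constrain which weights can occur. Applying $g^{22}xy-g^{33}yx=g^{23}$ to a vector $v\in V_\lambda$ gives
\[
(g^{22}-g^{33})\lambda_x\lambda_y=g^{23},
\]
and similarly for the other two pairs. For $M_2$ the same step uses $g^{00}xy+g^{33}yx=g^{03}$, and so on. These identities give the \emph{support} of the weight set.

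\textbf{Step 2: how the momenta act.} Divide the relation $g^{00}xp_x+g^{11}p_xx=-ig^{22}$ by $g^{00}\neq 0$, as allowed in Theorem (PBW). For $v\in V_\lambda$ this gives
\[
x(p_xv)=-\tfrac{g^{11}}{g^{00}}\lambda_x\,p_xv-\tfrac{ig^{22}}{g^{00}}v .
\]
Set $q_1:=-g^{11}/g^{00}$, and define $q_2,q_3$ analogously.
\begin{itemize}
\item In the genuinely deformed case $q_1\neq 1$, the operator $p_x$ maps $V_\lambda$ into $V_{q_1\lambda_x}$ plus a correction term in $V_\lambda$. That correction can be removed by the shifted generator $p_x':=p_x+c\,x^{-1}$ where $x$ is invertible, or by an explicit triangular change of basis.
\item Iterating, $p_x$ moves weights along the orbit $\lambda_x\mapsto q_1^{n}\lambda_x$, which is the familiar $q$-oscillator pattern.
\end{itemize}
I would order weights within each orbit by $|\lambda_x|,|\lambda_y|,|\lambda_z|$, taking the order lexicographically. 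A \emph{highest-weight vector} is then a weight vector killed by the raising parts of $p_x,p_y,p_z$.

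\textbf{Step 3: structure of cyclic submodules.} Using the PBW basis from Theorem (PBW), the submodule generated by a highest-weight vector is spanned by ordered monomials in the lowering parts applied to it. This makes it a highest-weight module.

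\textbf{Step 4: complete reducibility.} The direct-sum decomposition requires showing that every weight vector lies in a sum of such submodules, and that the sum is direct. This is the step I expect to be the main obstacle.
\begin{itemize}
\item For finite-dimensional $V$ with $|q_i|\neq 1$, each orbit is finite in $V$. So extremal weights exist, and one can peel off highest-weight submodules inductively on $\dim V$.
\item Directness requires an invariant complement. I would try to build one from a $*$-structure (unitarity, $p_i^*=p_i$, $x_i^*=x_i$) and take orthogonal complements.
\end{itemize}
Without some such hypothesis (finite dimension or a Hermitian form), I expect the claim can fail, for example through non-split extensions. So in writing the proof I would add this hypothesis explicitly.

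\textbf{Step 5: the classical specialization.} Take $g^{00}=1$ and $g^{ii}=-1$, with the off-diagonal components zero. Then \eqref{eq:relations2} becomes
\[
[x,p_x]=[y,p_y]=[z,p_z]=i,\qquad [x,y]=[y,z]=[z,x]=0,\qquad [p_i,p_j]=0 .
\]
Two points need care here.
\begin{itemize}
\item The defining ideal $\mathcal I_1$ says nothing about mixed pairs such as $x,p_y$. To obtain the classical Heisenberg algebra one must also impose $[x_i,p_j]=0$ for $i\neq j$, or pass to that quotient. I would state this explicitly.
\item Since $[x,p_x]=i$ has no finite-dimensional representations (take traces), the diagonal hypothesis must be read in the spectral sense: $x,y,z$ are self-adjoint with joint spectral decomposition.
\end{itemize}
Irreducible representations are then obtained from the Weyl form $e^{isx}e^{itp_x}=e^{-ist}e^{itp_x}e^{isx}$, and the Stone–von Neumann theorem gives uniqueness up to multiplicity. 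The decomposition above then becomes the direct-integral / multiplicity statement of that theorem.
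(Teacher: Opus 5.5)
Your suspicion in Step 4 is correct, and it is fatal: the first assertion is false, and neither of your proposed repairs rescues it.

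\emph{A finite-dimensional counterexample.} Take $g^{00}=1$, $g^{11}=g^{22}=g^{33}=-q$ with $q\neq0,1$, and all off-diagonal components zero. Then \eqref{eq:relations2} reads $x_ip_i-qp_ix_i=iq$, with commuting positions and commuting momenta. Let $\kappa=iq/(1-q)$, fix nonzero $a,b,c$, and let $E_{ij}v_k=\delta_{jk}v_i$ be the matrix units on $V=\mathbb{C}^3$. Put
\[
x=\mathrm{diag}(a,qa,qa),\qquad y=\mathrm{diag}(b,b,b/q),\qquad z=c,
\]
\[
p_x=E_{21}+\kappa x^{-1},\qquad p_y=E_{23}+\kappa y^{-1},\qquad p_z=\kappa z^{-1}.
\]
All relations of $M_1$ hold. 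Indeed, $xE_{21}=qE_{21}x$ and $yE_{23}=qE_{23}y$; $E_{21}$ commutes with $y$ and $E_{23}$ with $x$; $E_{21}E_{23}=E_{23}E_{21}=0$; and $\kappa(1-q)=iq$. Even $[x_i,p_j]=0$ holds for $i\neq j$, so this is not an artefact of the missing mixed relations.

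Here $x,y,z$ are diagonal with three distinct one-dimensional weight spaces. So the submodules are spanned by subsets of $\{v_1,v_2,v_3\}$ closed under $v_1\mapsto v_2$ and $v_3\mapsto v_2$. They are exactly $0$, $\langle v_2\rangle$, $\langle v_1,v_2\rangle$, $\langle v_2,v_3\rangle$ and $V$. Hence $V$ is finite-dimensional and indecomposable, yet no single weight vector generates it. It is therefore not a direct sum of highest-weight modules, whichever parts you declare raising. Your peeling argument does work for a single pair $(x,p_x)$ with $|q_1|\neq1$, where the weights form type-$A$ strings. Two momenta already produce the configuration $v_1\to v_2\leftarrow v_3$, which admits no invariant complement.

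\emph{The Hermitian repair fails more drastically.} Apply $*$ to the relation and add the result, $g^{00}p_xx+g^{11}xp_x=ig^{22}$, to the original. This gives $(g^{00}+g^{11})(xp_x+p_xx)=0$. So if $g^{00}+g^{11}\neq0$, then $p_xx=-xp_x$ and $(g^{00}-g^{11})xp_x=-ig^{22}$. For $g^{22}\neq0$ this is either contradictory outright (if $g^{00}=g^{11}$) or makes $xp_x=\alpha=-p_xx$ a nonzero scalar. In the latter case $\alpha^{-1}p_x$ is a right inverse and $-\alpha^{-1}p_x$ a left inverse of $x$, which forces $p_x=0$ and hence $\alpha=0$. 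So whenever $g^{00}+g^{11}\neq0\neq g^{22}$, the only such unitary representation is $V=0$.

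\emph{The paper's proof.} It does not establish the statement either. It builds a Verma module on $v_0$ with $xv_0=yv_0=zv_0=0$. But the first relation applied to $v_0$ gives $g^{00}x(p_xv_0)=-ig^{22}v_0$, so $v_0\in\ker x\cap\operatorname{im}x=0$ whenever $x$ is diagonalizable and $g^{22}\neq0$. Your Step 2 is the accurate analysis. The same computation shows that $0$ is not an eigenvalue of $x$ when $g^{22}\neq0$, so your shifted $p_x'$ is always available.

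\emph{Two further steps need correcting.}

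Step 3 relies on the PBW theorem, which is not available for $M_1$. As you note yourself in Step 5, nothing in $\mathcal I_1$ reorders $p_yx$ or $p_xy$, so $p_y$ need not even preserve $\lambda_x$. Moreover, the raising and lowering parts are graded pieces of operators on $V$, not elements of $M_1$, so no ordering argument in the algebra applies to them.

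In Step 5, the algebraic diagonal hypothesis is vacuous in every dimension, not only in finite ones. For $v\in V_\lambda$, $[x,p_x]=i$ gives $(x-\lambda_x)p_xv=iv$, so $v\in\ker(x-\lambda_x)\cap\operatorname{im}(x-\lambda_x)=0$. Reading diagonality spectrally discards Steps 1--4, since the Schr\"odinger representation has no eigenvectors. Stone--von Neumann also needs integrability to the Weyl form, which the relations do not supply. So that part is a separate theorem about the quotient with $[x_i,p_j]=0$ for $i\neq j$ and integrable representations, not a specialization of your decomposition.

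The honest write-up is a disproof via the example above, with the classical statement proved separately under those extra hypotheses.
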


\begin{proof}
We construct a Verma module by choosing a highest weight vector $v_0$ such that $x v_0 = y v_0 = z v_0 = 0$. The action of the momentum operators $p_x, p_y, p_z$ generates a basis $\{ p_x^d p_y^e p_z^f v_0 \}$. The relations of $M_1$ determine the action of $x, y, z$ on these basis vectors via commutation. Unitarity conditions impose constraints on the metric components. 
\end{proof}

\subsection{Examples}\label{examples}
We now present concrete realizations of the quotient algebras $M_1$ and $M_2$ through specializations of the metric tensor and, conversely, we show how known $q$-deformed Heisenberg algebras embed into our framework. These examples illustrate the unifying power of the metric-deformed approach and set the stage for the construction of the $q$-Dirac operator in Section~\ref{Dirac}.

\begin{example}[Canonical Heisenberg algebra]
Let $g^{\mu\nu} = \eta^{\mu\nu}$, where $\eta = \mathrm{diag}(1,-1,-1,-1)$ is the Minkowski metric, and assume all off-diagonal components vanish. Then the ideal $\mathcal{I}_1$ reduces to
\[
\mathcal{I}_1 = \left\langle x p_x - p_x x - i,\; y p_y - p_y y - i,\; z p_z - p_z z - i \right\rangle,
\]
together with $[x,y]=[y,z]=[z,x]=0$ and $[p_x,p_y]=[p_y,p_z]=[p_z,p_x]=0$. Hence $M_1 \cong \mathcal{H}_{\text{Heis}}$, the classical Heisenberg algebra.
\end{example}

\begin{example}[$q$-$\hbar$-deformed Heisenberg algebra \cite{Volovich-Arefeva-91, jaramillo2025new}]\label{ex:q-h}
Let $q \in \Bbbk$ be a deformation parameter. The metric components $g^{\mu\nu}$ for the $q$-$\hbar$-deformed Heisenberg algebra are given in Table~\ref{tab:metric-qhbar}, where each row corresponds to a different choice of the spatial indices $j,k \in \{1,2,3\}$.

\begin{table}[H]
\centering
\caption{Metric components $g^{\mu\nu}$ for the $q$-$\hbar$-deformed Heisenberg algebra}
\label{tab:metric-qhbar}
\begin{tabular}{|c|c|c|c|c|c|c|c|c|}
\hline
$j$ & $k$ & $g^{00}$ & $g^{11}$ & $g^{22}$ & $g^{33}$ & $g^{01}$ & $g^{02}$ & $g^{03}$ \\
\hline
1 & 1 & $-q$ & $1$ & $q^{1/2}$ & $0$ & $0$ & $0$ & $0$ \\
\hline
2 & 2 & $-q$ & $1$ & $0$ & $q^{1/2}$ & $0$ & $0$ & $0$ \\
\hline
3 & 3 & $-q$ & $q^{1/2}$ & $0$ & $1$ & $0$ & $0$ & $0$ \\
\hline
1 & 2 & $-q$ & $0$ & $0$ & $1$ & $0$ & $0$ & $q^{1/2}$ \\
\hline
1 & 3 & $-q$ & $0$ & $1$ & $0$ & $0$ & $q^{1/2}$ & $0$ \\
\hline
2 & 3 & $-q$ & $1$ & $0$ & $0$ & $q^{1/2}$ & $0$ & $0$ \\
\hline
\end{tabular}
\end{table}

All off-diagonal components not listed in the table are set to zero. Substituting these values into the defining relations of either $M_1$ or $M_2$ (depending on the choice of the ideal $\mathcal{I}_\alpha$, $\alpha = 1,2$) yields the deformed commutation relation
\[
p_k x_j - q x_j p_k = -i \hbar q^{1/2}, \qquad j,k = 1,2,3.
\]
The remaining generators commute as in the classical case.

Thus, the corresponding quotient algebra
\[
M_\alpha = \frac{\Bbbk[x,y,z,p_x,p_y,p_z]}{\mathcal{I}_\alpha(q)}, \qquad \alpha = 1,2,
\]
recovers the $q$-$\hbar$-deformed Heisenberg algebra introduced in \cite{Volovich-Arefeva-91, jaramillo2025new}.
\end{example}

\begin{example}[New $q$-Heisenberg algebra \cite{jaramillo2025new}]\label{ex:new-heis}
The \textbf{new $q$-$\hbar$ Heisenberg algebra} $\boldsymbol{\mathcal{H}}_{q}$ is generated by $\hat{x}_{\alpha}, \hat{y}_{\lambda}, \hat{p}_{\beta}$ $(\alpha,\beta,\lambda\in\{1,2,3\})$ subject to:
\begin{align}
   \label{eq:new1} \hat{x}_{\alpha}\hat{p}_{\beta} - q^{n}\hat{p}_{\beta}\hat{x}_{\alpha} &= i q^{n-1}\hbar^{n}\, \Psi, \\
   \label{eq:new2} q^{m}\hat{x}_{\alpha}\hat{y}_{\lambda} - \hat{y}_{\lambda}\hat{x}_{\alpha} &= -i (q-1)^{m-1}\hbar^{m-1}\, \Pi, \\
    \label{eq:new3} q^{l}\hat{y}_{\lambda}\hat{p}_{\beta} - q^{l+1}\hat{p}_{\beta}\hat{y}_{\lambda} &= i \hbar^{l}\, \Phi,
\end{align}
where $\Psi,\Pi,\Phi$ are dynamical functions and $q\in\mathbb{R}\setminus\{0,1\}$. 

Under the identifications $\{x_{1},y_{1}\}=x$, $\{x_{2},y_{2}\}=y$ and $\{x_{3},y_{3}\}=z$, the metric components $g^{\mu\nu}$ for each relation are given in the following tables.

\medskip
\noindent\textbf{Relation (\ref{eq:new1}):} Components for the algebras $M_1$ and $M_2$.

\begin{table}[H]
\centering
\caption{Metric components for relation (\ref{eq:new1})}
\label{tab:metric-new1}
\begin{tabular}{|c|c|c|c|c|c|c|c|c|c|}
\hline
Algebra & $\alpha$ & $\beta$ & $g^{00}$ & $g^{11}$ & $g^{22}$ & $g^{33}$ & $g^{01}$ & $g^{02}$ & $g^{03}$ \\
\hline
$M_1$ & 1 & 1 & 1 & $-q^{-n}$ & $q^{n-1}\Psi$ & 0 & 0 & 0 & 0 \\
\hline
$M_2$ & 1 & 2 & 1 & 0 & 0 & $-q^{n}$ & 0 & 0 & $q^{n-1}\Psi$ \\
\hline
$M_2$ & 1 & 3 & 1 & 0 & $-q^{n}$ & 0 & $q^{n-1}\Psi$ & 0 & 0 \\
\hline
$M_2$ & 2 & 1 & 1 & 0 & 0 & $q^{-n}$ & 0 & 0 & $q^{n-1}\Psi$ \\
\hline
$M_1$ & 2 & 2 & 1 & 0 & $-q^{n}$ & $q^{n-1}\Psi$ & 0 & 0 & 0 \\
\hline
$M_2$ & 2 & 3 & 1 & $-q^{n}$ & 0 & 0 & $q^{n-1}\Psi$ & 0 & 0 \\
\hline
$M_2$ & 3 & 1 & 1 & 0 & 0 & $-q^{n}$ & 0 & $q^{n-1}\Psi$ & 0 \\
\hline
$M_2$ & 3 & 2 & 1 & $-q^{n}$ & 0 & 0 & $q^{n-1}\Psi$ & 0 & 0 \\
\hline
$M_1$ & 3 & 3 & 1 & $q^{n-1}\Psi$ & 0 & $-q^{n}$ & 0 & 0 & 0 \\
\hline
\end{tabular}
\end{table}

\medskip
\noindent\textbf{Relation (\ref{eq:new2}):} Components for $M_1$ (left) and $M_2$ (right).

\begin{table}[H]
\centering
\begin{minipage}{0.48\textwidth}
\centering
\caption{Algebra $M_1$}
\label{tab:metric-new2-m1}
\begin{tabular}{|c|c|c|c|c|c|}
\hline
$\alpha$ & $\lambda$ & $g^{00}$ & $g^{11}$ & $g^{22}$ & $g^{33}$ \\
\hline
1 & 2 & 0 & 0 & $q^{m}$ & 1 \\
\hline
1 & 3 & 0 & $-1$ & $-q^{m}$ & 0 \\
\hline
2 & 1 & 0 & 0 & $-1$ & $-q^{m}$ \\
\hline
2 & 3 & 0 & 1 & 0 & $q^{m}$ \\
\hline
3 & 1 & 0 & $q^{m}$ & 1 & 0 \\
\hline
3 & 2 & 0 & $-q^{m}$ & 0 & $-1$ \\
\hline
\end{tabular}
\end{minipage}
\hfill
\begin{minipage}{0.48\textwidth}
\centering
\caption{Algebra $M_2$}
\label{tab:metric-new2-m2}
\begin{tabular}{|c|c|c|c|c|c|}
\hline
$\alpha$ & $\lambda$ & $g^{00}$ & $g^{11}$ & $g^{22}$ & $g^{33}$ \\
\hline
1 & 2 & $q^{m}$ & 0 & 0 & $-1$ \\
\hline
1 & 3 & $q^{m}$ & 0 & $-1$ & 0 \\
\hline
2 & 1 & 1 & 0 & 0 & $q^{m}$ \\
\hline
2 & 3 & $q^{m}$ & $-1$ & 0 & 0 \\
\hline
3 & 1 & $-1$ & 0 & $q^{m}$ & 0 \\
\hline
3 & 2 & $-1$ & $q^{m}$ & 0 & 0 \\
\hline
\end{tabular}
\end{minipage}
\end{table}

\medskip
\noindent\textbf{Relation (\ref{eq:new3}):} Components.

\begin{table}[H]
\centering
\caption{Metric components for relation (\ref{eq:new3})}
\label{tab:metric-new3}
\begin{tabular}{|c|c|c|c|c|c|}
\hline
$\lambda$ & $\beta$ & $g^{00}$ & $g^{11}$ & $g^{22}$ & $g^{33}$ \\
\hline
1 & 1 & $q^{l}$ & $-q^{l+1}$ & $-\hbar^{l}\Phi$ & 0 \\
\hline
1 & 2 & $q^{l}$ & 0 & 0 & $-q^{l+1}$ \\
\hline
1 & 3 & $q^{l}$ & 0 & $-q^{l+1}$ & 0 \\
\hline
2 & 2 & $q^{l}$ & 0 & $-q^{l+1}$ & $-\hbar^{l}\Phi$ \\
\hline
2 & 3 & $q^{l}$ & $-q^{l+1}$ & 0 & 0 \\
\hline
3 & 3 & $q^{l}$ & $-\hbar^{l}\Phi$ & 0 & $-q^{l+1}$ \\
\hline
\end{tabular}
\end{table}

\end{example}

\begin{theorem}[Embedding of $\mathcal{H}_q$ into $M_1$]
Under the specialization $l=0$, $m=1$, $\Psi=q^{2-n}$, $\Phi=i$, and the metric tensor
\[
g^{00}=1,\; g^{11}=-q^{n},\; g^{22}=-q,\; g^{33}=-1,\; g^{23}=i(q-1)\Pi,\; g^{12}=g^{13}=0,
\]
the algebra $\mathcal{H}_q$ is isomorphic to $M_1$.
\end{theorem}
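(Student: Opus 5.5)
The plan is to exhibit mutually inverse algebra homomorphisms between $\mathcal{H}_q$ and $M_1$ using the universal property of the quotient presentations. Both algebras are quotients of a free algebra on six generators once the identifications $\{x_1,y_1\}=x$, $\{x_2,y_2\}=y$, $\{x_3,y_3\}=z$ of Example~\ref{ex:new-heis} are imposed. So it suffices to show that, under the stated specialization, the two generating sets of relations span the same two-sided ideal.

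Concretely, I would define $\varphi:\Bbbk\langle \hat x_\alpha,\hat y_\lambda,\hat p_\beta\rangle\to M_1$ by
\[
\hat x_1,\hat y_1\mapsto x,\quad \hat x_2,\hat y_2\mapsto y,\quad \hat x_3,\hat y_3\mapsto z,\quad \hat p_\beta\mapsto \hbar^{n}p_\beta .
\]
Here $p_1=p_x$, $p_2=p_y$, $p_3=p_z$. The rescaling of the momenta absorbs the powers of $\hbar$; equivalently one normalizes $\hbar=1$.

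Next I would substitute the specialization into \eqref{eq:relations2} and compare line by line.
\begin{itemize}
\item With $g^{00}=1$, $g^{11}=-q^n$, $g^{22}=-q$, the first relation becomes $xp_x-q^np_xx=iq$. This is \eqref{eq:new1} with $\Psi=q^{2-n}$, since $iq^{n-1}q^{2-n}=iq$.
\item The second relation becomes $yp_y-q\,p_yy=i$. This is \eqref{eq:new3} with $l=0$ and $\Phi=i$.
\item The relation $-q\,xy+yx=i(q-1)\Pi$ is \eqref{eq:new2} with $m=1$.
\item The third diagonal relation gives $zp_z-p_zz=iq^n$.
\item The remaining relations, which have $g^{12}=g^{13}=0$, give the $q$-commutation relations $yz=q^nzy$ and $q^nzx=-q\,xz$, together with their momentum analogues.
\end{itemize}
These are exactly the specialized relations of $\mathcal{H}_q$ read off from Tables~\ref{tab:metric-new1}--\ref{tab:metric-new3} for the corresponding index pairs. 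Hence $\varphi$ kills every defining relation of $\mathcal{H}_q$ and descends to a homomorphism $\bar\varphi:\mathcal{H}_q\to M_1$.

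Conversely, I would define $\psi:M_1\to\mathcal{H}_q$ by the inverse assignment $p_\beta\mapsto\hbar^{-n}\hat p_\beta$. By the same table comparison, each generator of $\mathcal{I}_1$ in \eqref{eq:ideal relation} maps to a defining relation of $\mathcal{H}_q$, so $\psi$ is well defined. Both composites $\bar\varphi\circ\psi$ and $\psi\circ\bar\varphi$ fix the generators, so they are the identity maps.

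As a consistency check, $g^{00}$ and all $g^{ii}$ are nonzero for $q\neq0$, so the PBW theorem for $M_1$ applies. I would verify that the ordered monomials $x^ay^bz^cp_x^dp_y^ep_z^f$ correspond to the ordered monomials of $\mathcal{H}_q$. This excludes the possibility that one presentation collapses while the other does not.

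The main obstacle is the index bookkeeping. The relations \eqref{eq:new1}--\eqref{eq:new3} are stated uniformly in $\alpha,\beta,\lambda$, whereas $M_1$ assigns different coefficients ($q^n$, $q$, $1$) to different coordinate directions. I would resolve this by reading the specialization, as the tables do, as fixing one coefficient per index pair. I would then check explicitly that each of the nine generators of $\mathcal{I}_1$ matches exactly one specialized $\mathcal{H}_q$ relation, and that no $\mathcal{H}_q$ relation remains unmatched. Since $\Pi$ is a dynamical function, it also has to be treated as an element commuting with the generators, so that the relation $g^{23}=i(q-1)\Pi$ stays well defined in the quotient.
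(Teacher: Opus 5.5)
Your proof has a genuine gap. It is the step you postpone as ``index bookkeeping'': the claim that, under the stated specialization, the nine generators of $\mathcal{I}_1$ and the specialized relations \eqref{eq:new1}--\eqref{eq:new3} generate the same ideal. You propose to read the specialization ``as the tables do, as fixing one coefficient per index pair''. That reading is not available. The tables in Example~\ref{ex:new-heis} realize each relation of $\mathcal{H}_q$ in its own specialization of $M_1$ or $M_2$, and all off-diagonal pairs of \eqref{eq:new1} appear only in $M_2$. The theorem, by contrast, fixes one metric and hence one set of nine relations.

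With that single metric the matching already fails for relations you claim to match:
\begin{itemize}
\item For $l=0$, $\Phi=i$, the right-hand side of \eqref{eq:new3} is $i\hbar^{0}\Phi=-1$, not $i$.
\item For $m=1$, the right-hand side of \eqref{eq:new2} is $-i(q-1)^{0}\hbar^{0}\Pi=-i\Pi$, not $-i(q-1)\Pi$.
\item A smaller slip: $g^{11}zx-g^{22}xz=g^{12}$ gives $q\,xz=q^{n}zx$, not $q^{n}zx=-q\,xz$.
\end{itemize}
The remaining generators ($zp_z-p_zz=iq^{n}$, $yz=q^{n}zy$, $q\,xz=q^{n}zx$ and their momentum analogues) are not specializations of any relation of $\mathcal{H}_q$. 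With $\hbar=1$, \eqref{eq:new1} and \eqref{eq:new3} at index $(3,3)$ give $zp_z-q^{n}p_zz=iq$ and $zp_z-q\,p_zz=-1$, and \eqref{eq:new2} at $(2,3)$ gives $q\,yz-zy=-i\Pi$. So nothing shows that $\psi$ is well defined.

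The other direction fails as well, since $\varphi$ must kill every relation of $\mathcal{H}_q$ for all $\alpha,\beta,\lambda\in\{1,2,3\}$:
\begin{itemize}
\item The cross relation \eqref{eq:new1} at $(1,2)$ maps to $\hbar^{n}(xp_y-q^{n}p_yx-iq)$, but no generator of $\mathcal{I}_1$ even involves $x$ together with $p_y$.
\item \eqref{eq:new2} at $\alpha=\lambda=1$ becomes $(q-1)x^{2}=-i\Pi$, which is also absent from $\mathcal{I}_1$.
\end{itemize}
Worse, your identification $\hat x_\alpha=\hat y_\alpha$ makes \eqref{eq:new1} and \eqref{eq:new3} constrain the same pair. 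At $\alpha=\beta=\lambda=1$ you get $xp_x-q^{n}p_xx=iq$ and $xp_x-q\,p_xx=-1$, hence $(q-q^{n})p_xx=1+iq$. If $q^{n-1}=1$, this already says $1+iq=0$. Otherwise $p_xx=c$ and $xp_x=d$ are scalars, and $c\neq d$ for real $q\neq0,1$. Then $cx=x(p_xx)=(xp_x)x=dx$ forces $x=0$, so $c=p_xx=0$, i.e.\ $1=0$.

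So the specialized $\mathcal{H}_q$ is the zero algebra. No relation-by-relation identification with the generators of $\mathcal{I}_1$ exists, and the PBW comparison you propose cannot succeed. The paper's own proof does not supply this step either. It substitutes into three relations, asserts (wrongly, as you correctly noticed) that the rest vanish because $g^{12}=g^{13}=0$, and concludes only an embedding. What your computation genuinely shows is that three relations of the specialized $\mathcal{H}_q$ hold in $M_1$, and even that only after replacing $\Phi=i$ by $\Phi=1$ and $\Pi$ by $(q-1)\Pi$.
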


\begin{proof}
Direct substitution into the defining relations of $M_1$ (cf.~\eqref{eq:relations2}) yields the stated identifications. The $x$-$p_x$ relation gives $x p_x - q^{n} p_x x = i q$. This is precisely the $q$-deformed Heisenberg relation, which is known to admit representations on $q$-deformed Fock spaces \cite{schmudgen-09}. The $x$-$y$ relation produces $q xy - yx = -i(q-1)\Pi$, which is a $q$-deformed version of the canonical commutation relation for position operators. The $y$-$p_y$ relation yields $y p_y - q p_y y = i$. All remaining relations vanish identically due to $g^{12}=g^{13}=0$. Thus, $\mathcal{H}_q$ embeds as a subalgebra of $M_1$, and this embedding preserves the $q$-deformed commutation structure. 
\end{proof}

\begin{example}[$q$-generalized Heisenberg algebra \cite{Razavinia-Lopes2022}]
The algebra $\mathcal{H}_{q}(f,g)$ is generated by $\hat{x},\hat{y},\hat{h}$ with
\[
\hat{h}\hat{x}=\hat{x}f(\hat{h}),\quad \hat{y}\hat{h}=f(\hat{h})\hat{y},\quad \hat{y}\hat{x}-q\hat{x}\hat{y}=\hbar g(\hat{h}).
\]
For $f(\hat{h})=\hat{h}$, $g(\hat{h})=1$, we obtain the central extension
\[
\hat{h}\hat{x}=\hat{x}\hat{h},\quad \hat{y}\hat{h}=\hat{h}\hat{y},\quad \hat{y}\hat{x}-q\hat{x}\hat{y}=\hbar,
\]
so $\hat{h}$ is central. This recovers a $q$-deformed Heisenberg relation of the form $p_x x - q x p_x = -\hbar$ under $\hat{x}\leftrightarrow x$, $\hat{y}\leftrightarrow p_x$, $\hat{h}=q^n$.
\end{example}

\begin{example}[Embedding of $\mathcal{H}_q(t,1)$ into $M_2$]\label{ex:gen-embedding}
Take $f(t)=t$, $g(t)=1$ and the metric components
\[
g_{00}=-1,\quad g^{11}=1,\quad g^{22}=0,\quad g^{33}=q,
\]
with all other off-diagonals zero. Identifying $\hat{x}\leftrightarrow x$, $\hat{y}\leftrightarrow p_x$, $\hat{h}\leftrightarrow q^n$ (constant), the $M_2$ relations reduce to
\[
p_x x - q x p_x = \hbar,
\]
which is exactly the $q$-deformed Heisenberg relation. Hence $\mathcal{H}_q(t,1)$ embeds as a subalgebra of $M_2$.
\end{example}

\begin{proposition}
$\mathcal{H}_q(t,1) \cong \Bbbk_q[x,p] \otimes \Bbbk[\hat{h}]$, where $\Bbbk_q[x,p]$ is the $q$-deformed Weyl algebra $p x - q x p = \hbar$.
\end{proposition}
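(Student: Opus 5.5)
The proof will go through universal properties and presentations. With $f(t)=t$ and $g(t)=1$, the algebra $\mathcal{H}_q(t,1)$ is by definition the free algebra $\Bbbk\langle \hat x,\hat y,\hat h\rangle$ modulo three relations:
\[
\hat h\hat x-\hat x\hat h=0,\qquad \hat y\hat h-\hat h\hat y=0,\qquad \hat y\hat x-q\hat x\hat y-\hbar=0 .
\]
The tensor product $\Bbbk_q[x,p]\otimes\Bbbk[\hat h]$ of two algebras has the standard presentation: it is generated by $x,p,\hat h$, subject to the relation $px-qxp=\hbar$ and to the requirement that $\hat h$ commutes with $x$ and $p$. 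The two presentations therefore match under $\hat x\mapsto x\otimes 1$, $\hat y\mapsto p\otimes 1$, $\hat h\mapsto 1\otimes \hat h$. I would make this into an isomorphism by constructing homomorphisms in both directions.

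First, I define $\phi:\mathcal{H}_q(t,1)\to \Bbbk_q[x,p]\otimes\Bbbk[\hat h]$ on generators as above. By the universal property of the free algebra, it suffices to check that the three defining relations go to zero. The two commutation relations with $\hat h$ hold because the factors of a tensor product commute. The relation $\hat y\hat x-q\hat x\hat y=\hbar$ maps to $(px-qxp)\otimes 1=\hbar\otimes 1$, which holds.

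Second, I construct the inverse $\psi$. Define algebra maps $\psi_1:\Bbbk_q[x,p]\to\mathcal{H}_q(t,1)$ by $x\mapsto\hat x$, $p\mapsto\hat y$, and $\psi_2:\Bbbk[\hat h]\to\mathcal{H}_q(t,1)$ by $\hat h\mapsto\hat h$. The map $\psi_1$ is well defined because the single relation of $\Bbbk_q[x,p]$ holds in $\mathcal{H}_q(t,1)$; $\psi_2$ is well defined because $\Bbbk[\hat h]$ is free. The images of $\psi_1$ and $\psi_2$ commute elementwise, since $\hat h$ commutes with both generators $\hat x,\hat y$ of the image of $\psi_1$ and hence with everything it generates. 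The universal property of the tensor product of algebras then gives $\psi=\psi_1\otimes\psi_2$, an algebra map with $\psi(a\otimes b)=\psi_1(a)\psi_2(b)$.

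Finally, $\psi\circ\phi$ and $\phi\circ\psi$ are the identity on generators, so they are the identity maps. As an optional cross-check, which is also the only step needing real care, one can compare PBW bases. The diamond lemma applied to the reduction $px\to qxp+\hbar$ gives $\{x^a p^b\}$ as a basis of $\Bbbk_q[x,p]$; the single overlap-free rewriting rule has no ambiguities. In the same way, $\{\hat x^a\hat y^b\hat h^c\}$ is a basis of $\mathcal{H}_q(t,1)$, using the rules $\hat y\hat x\to q\hat x\hat y+\hbar$, $\hat h\hat x\to\hat x\hat h$, $\hat h\hat y\to \hat y\hat h$. Its only ambiguity, $\hat h\hat y\hat x$, is resolvable because $\hbar$ is central. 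The isomorphism $\phi$ sends one basis bijectively onto the other. I expect the main subtlety to be this resolvability check together with confirming that $q\neq 0$ does not matter. Neither is strictly needed for the universal-property argument, which works for every $q$.
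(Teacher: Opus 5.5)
Your proof is correct and uses the same map as the paper, namely $\hat x\mapsto x$, $\hat y\mapsto p$, $\hat h\mapsto$ the central generator. The paper simply asserts that this map is an isomorphism ``by construction,'' whereas you verify it by building the inverse through the universal properties of the free algebra and of the tensor product; your diamond-lemma/PBW cross-check is also correct but not needed for the argument.
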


\begin{proof}
The map $\hat{x}\mapsto x$, $\hat{y}\mapsto p$, $\hat{h}\mapsto \text{central generator}$ is an isomorphism by construction. 
\end{proof}

\section{From $q$-deformed metric to $q$-Dirac operator}\label{Dirac}

\begin{proposition}[Deformed D'Alembertian from the metric tensor]
Let $(M, g)$ be a Lorentzian manifold with metric tensor $g$ diagonalized by Sylvester's theorem of inertia (cf. \S\ref{sec:metric}), so that in local coordinates $(t, x, y, z)$ the metric takes the form
\[
g = |g_{00}| \, dt \otimes dt + |g_{11}| \, dx \otimes dx + |g_{22}| \, dy \otimes dy + |g_{33}| \, dz \otimes dz,
\]
with $g_{00} > 0$ and $g_{11}, g_{22}, g_{33} < 0$ for a Lorentzian signature $(+,-,-,-)$. Then the corresponding D'Alembertian (wave operator) acting on scalar functions is given by
\begin{equation}\label{eq:operator KG}
\square_q := \frac{1}{\sqrt{|\det g|}} \partial_\mu \left( \sqrt{|\det g|} \, g^{\mu\nu} \partial_\nu \right) = |g^{00}| \frac{\partial^2}{\partial t^2} - |g^{11}| \frac{\partial^2}{\partial x^2} - |g^{22}| \frac{\partial^2}{\partial y^2} - |g^{33}| \frac{\partial^2}{\partial z^2},
\end{equation}
where $g^{\mu\nu}$ are the components of the inverse metric, and the absolute values $|g^{\mu\nu}|$ ensure positivity of the coefficients in the wave equation.
\end{proposition}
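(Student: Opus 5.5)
The plan is a direct computation of the Laplace--Beltrami operator for a diagonal metric with constant coefficients. The first step is to fix the sign conventions. The displayed form of $g$ carries absolute values, but the hypothesis $g_{00}>0$, $g_{11},g_{22},g_{33}<0$ makes clear that what is meant is $g=g_{00}\,dt\otimes dt+\sum_{i}g_{ii}\,dx^i\otimes dx^i$, with $g_{00}=|g_{00}|$ and $g_{ii}=-|g_{ii}|$, exactly as in the quadratic form \eqref{eq: metric1}. I will work with this reading throughout. I will also assume that the coefficients $g_{\mu\mu}$ are real constants in the chosen chart. This matches the setting of Sylvester's theorem recalled in \S\ref{sec:metric}, which diagonalizes the form with constant entries, and the standing assumption $g^{\mu\nu}\in\mathbb{R}$ in the definition of $M_1,M_2$.

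Next I compute the inverse metric and the determinant. Because $g$ is diagonal, its inverse is diagonal with $g^{\mu\mu}=1/g_{\mu\mu}$ and $g^{\mu\nu}=0$ for $\mu\neq\nu$. Consequently $g^{00}=|g^{00}|>0$ and $g^{ii}=-|g^{ii}|<0$ for $i=1,2,3$. The determinant is $\det g=g_{00}g_{11}g_{22}g_{33}$, which is negative, so $\sqrt{|\det g|}=\bigl(|g_{00}g_{11}g_{22}g_{33}|\bigr)^{1/2}$ is a nonzero constant.

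With these facts the computation is short. Since $\sqrt{|\det g|}$ is constant, it commutes with $\partial_\mu$ and cancels against the prefactor $1/\sqrt{|\det g|}$. Since the $g^{\mu\nu}$ are constant as well, they also pass through $\partial_\mu$. The off-diagonal terms vanish, which leaves
\[
\square_q=\sum_\mu g^{\mu\mu}\partial_\mu^2 .
\]
Substituting the signs found above gives
\[
\square_q=|g^{00}|\partial_t^2-|g^{11}|\partial_x^2-|g^{22}|\partial_y^2-|g^{33}|\partial_z^2,
\]
which is \eqref{eq:operator KG}. As a check, in the Minkowski case $g=\eta$ this reduces to the usual wave operator $\partial_t^2-\nabla^2$.

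The only delicate point is the constancy assumption, so I would state it explicitly as part of the hypothesis. If the diagonal coefficients were allowed to depend on the coordinates, the product rule would generate first-order terms of the form
\[
\frac{1}{\sqrt{|\det g|}}\,\partial_\mu\bigl(\sqrt{|\det g|}\,g^{\mu\mu}\bigr)\,\partial_\mu ,
\]
and the pure second-order form of \eqref{eq:operator KG} would fail. Sylvester's theorem only diagonalizes the metric pointwise, so it does not by itself remove these terms. The way to handle this is to note that the paper's framework treats the $g^{\mu\nu}$ as fixed real parameters, namely the deformation data of $M_1$ and $M_2$, so the metric is flat with constant coefficients. Under that hypothesis the first-order terms vanish identically and the proof goes through as described.
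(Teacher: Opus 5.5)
Your proposal is correct and follows the same route as the paper. Both use the diagonal inverse $g^{\mu\mu}=1/g_{\mu\mu}$, a constant $\sqrt{|\det g|}$ that cancels against the prefactor, and the sign bookkeeping $g^{00}=|g^{00}|$, $g^{ii}=-|g^{ii}|$. Your proof is also more careful than the paper's in two places. You read the displayed metric with $g_{ii}=-|g_{ii}|$, which resolves its sign inconsistency. You also state constancy of the coefficients as an explicit hypothesis, whereas the paper attributes it to Sylvester's theorem, which only diagonalizes the form pointwise.
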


\begin{proof}
For a diagonal metric $g_{\mu\nu} = \mathrm{diag}(g_{00}, g_{11}, g_{22}, g_{33})$, the determinant is $\det g = g_{00} g_{11} g_{22} g_{33}$, and the inverse metric components are $g^{\mu\nu} = 1/g_{\mu\nu}$ (no summation). The standard expression for the Laplace–Beltrami operator (wave operator) on a Lorentzian manifold is
\[
\square = \frac{1}{\sqrt{|\det g|}} \partial_\mu \left( \sqrt{|\det g|} \, g^{\mu\nu} \partial_\nu \right).
\]
Substituting the diagonal form, we have $\sqrt{|\det g|} = \sqrt{|g_{00} g_{11} g_{22} g_{33}|}$. Since $g_{00} > 0$ and $g_{ii} < 0$ for $i=1,2,3$, we write $|g_{ii}| = -g_{ii}$.

\medskip

\noindent \textbf{Time component ($\mu = \nu = 0$):}
\[
\frac{1}{\sqrt{|\det g|}} \partial_0 \left( \sqrt{|\det g|} \, g^{00} \partial_0 \right) = \frac{1}{\sqrt{|\det g|}} \partial_0 \left( \sqrt{|\det g|} \, \frac{1}{g_{00}} \partial_0 \right).
\]
Since $g_{00}$ is constant in the local coordinates (by Sylvester's theorem, we have diagonalized the metric to constant coefficients), we obtain
\[
\frac{1}{\sqrt{|\det g|}} \partial_0 \left( \sqrt{|\det g|} \, \frac{1}{g_{00}} \partial_0 \right) = \frac{1}{g_{00}} \partial_0^2 = |g^{00}| \partial_0^2,
\]
where $g^{00} = 1/g_{00}$ and $|g^{00}| = 1/g_{00}$ because $g_{00} > 0$.

\medskip

\noindent \textbf{Spatial components ($\mu = \nu = i$, with $i = 1,2,3$):}
\[
\frac{1}{\sqrt{|\det g|}} \partial_i \left( \sqrt{|\det g|} \, g^{ii} \partial_i \right) = \frac{1}{\sqrt{|\det g|}} \partial_i \left( \sqrt{|\det g|} \, \frac{1}{g_{ii}} \partial_i \right).
\]
Since $g_{ii} < 0$, we have $1/g_{ii} = -1/|g_{ii}|$, and $|g^{ii}| = 1/|g_{ii}|$. Moreover, $\sqrt{|\det g|}$ is constant in the diagonalized coordinates, so
\[
\frac{1}{\sqrt{|\det g|}} \partial_i \left( \sqrt{|\det g|} \, \frac{1}{g_{ii}} \partial_i \right) = \frac{1}{g_{ii}} \partial_i^2 = -\frac{1}{|g_{ii}|} \partial_i^2 = -|g^{ii}| \partial_i^2.
\]

\medskip

\noindent \textbf{Summation over all coordinates:}
\[
\square = |g^{00}| \frac{\partial^2}{\partial t^2} + \sum_{i=1}^3 \left( -|g^{ii}| \frac{\partial^2}{\partial (x^i)^2} \right) = |g^{00}| \frac{\partial^2}{\partial t^2} - |g^{11}| \frac{\partial^2}{\partial x^2} - |g^{22}| \frac{\partial^2}{\partial y^2} - |g^{33}| \frac{\partial^2}{\partial z^2},
\]
as claimed. 
\end{proof}

\begin{theorem}[Factorization of the deformed D'Alembertian]\label{factorization}
Let $g^{00}, g^{11}, g^{22}, g^{33} \in \mathbb{R}$ be the diagonal components of the metric tensor, and let $\square_q$ be the deformed D'Alembertian operator defined in (\ref{eq:operator KG}), i.e.,
\[
\square_q = |g^{00}| \frac{\partial^2}{\partial t^2} - |g^{11}| \frac{\partial^2}{\partial x^2} - |g^{22}| \frac{\partial^2}{\partial y^2} - |g^{33}| \frac{\partial^2}{\partial z^2}.
\]
Define the associated Dirac operator
\begin{equation}\label{eq: Dirac}
D := \gamma^0 \sqrt{|g^{00}|} \, \frac{\partial}{\partial t} - \gamma^x \sqrt{|g^{11}|} \, \frac{\partial}{\partial x} - \gamma^y \sqrt{|g^{22}|} \, \frac{\partial}{\partial y} - \gamma^z \sqrt{|g^{33}|} \, \frac{\partial}{\partial z},
\end{equation}
where $\gamma^0, \gamma^x, \gamma^y, \gamma^z$ are the Dirac gamma matrices satisfying the Clifford algebra
\begin{equation}\label{eq: clifford}
\{\gamma^\mu, \gamma^\nu\} = 2 \eta^{\mu\nu} \mathbf{1}_4, \qquad \eta^{\mu\nu} = \mathrm{diag}(1,-1,-1,-1).
\end{equation}
Then $D$ factorizes $\square_q$ in the sense that
\begin{equation}\label{eq: factorization}
D^2 = \square_q \, \mathbf{1}_4.
\end{equation}
In other words, the square of the Dirac operator recovers the deformed Klein–Gordon (D'Alembertian) operator.
\end{theorem}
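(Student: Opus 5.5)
The plan is a direct computation. The idea is to write $D$ as a Clifford-weighted sum of commuting first-order scalar operators. Squaring it then lets the anticommutation relation \eqref{eq: clifford} kill all cross terms.

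First, introduce the scalar first-order operators
\[
a_0 := \sqrt{|g^{00}|}\,\partial_t,\qquad a_1 := -\sqrt{|g^{11}|}\,\partial_x,\qquad a_2 := -\sqrt{|g^{22}|}\,\partial_y,\qquad a_3 := -\sqrt{|g^{33}|}\,\partial_z,
\]
so that \eqref{eq: Dirac} reads $D=\sum_{\mu=0}^3 \gamma^\mu a_\mu$ (with $\gamma^1=\gamma^x$, etc.). Two commutation facts are needed before squaring.

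1. The $\gamma^\mu$ are constant $4\times 4$ matrices acting on the spinor index, while the $a_\nu$ act as scalars on each component. Hence $\gamma^\mu a_\nu = a_\nu \gamma^\mu$.
2. The coefficients $\sqrt{|g^{\mu\mu}|}$ are constants, exactly as in the proof of the preceding proposition, where Sylvester's theorem was used to reduce to constant diagonal coefficients. By Schwarz's theorem the partial derivatives commute on $C^2$ functions, so $a_\mu a_\nu = a_\nu a_\mu$.

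Next, expand the square and symmetrize:
\[
D^2=\sum_{\mu,\nu}\gamma^\mu\gamma^\nu a_\mu a_\nu=\tfrac12\sum_{\mu,\nu}\{\gamma^\mu,\gamma^\nu\}a_\mu a_\nu=\sum_{\mu,\nu}\eta^{\mu\nu}a_\mu a_\nu\,\mathbf 1_4 .
\]
The symmetrization is legitimate because $a_\mu a_\nu$ is symmetric in $(\mu,\nu)$. Since $\eta$ is diagonal, this collapses to $(a_0^2-a_1^2-a_2^2-a_3^2)\,\mathbf 1_4$. The minus signs placed in front of $a_1,a_2,a_3$ disappear upon squaring, and $(\sqrt{|g^{ii}|})^2=|g^{ii}|$. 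Therefore
\[
a_0^2=|g^{00}|\,\partial_t^2,\qquad a_i^2=|g^{ii}|\,\partial_i^2 \quad (i=1,2,3),
\]
which reproduces exactly $\square_q$ from \eqref{eq:operator KG} and proves \eqref{eq: factorization}.

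The main obstacle is step 2, the constancy of the coefficients. If the $g^{\mu\mu}$ depended on the coordinates, then $a_\mu a_\nu\neq a_\nu a_\mu$. The symmetrization would then fail: extra first-order terms would appear (as with spin-connection corrections), and $D^2=\square_q\mathbf 1_4$ would not hold. I would therefore state explicitly that the hypothesis "$g^{00},\dots,g^{33}\in\mathbb R$" in the theorem means constants. I would also note that the $g^{\mu\nu}$ must here be taken as central scalars, not the non-central elements of the $M_1,M_2$ setting. With that convention, everything else is the standard Clifford-algebra computation.
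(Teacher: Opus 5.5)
Your proposal is correct and follows essentially the same route as the paper: expand $D^2$, use that the constant gamma matrices commute with the constant-coefficient derivatives and that $a_\mu a_\nu$ is symmetric, then let $\{\gamma^\mu,\gamma^\nu\}=2\eta^{\mu\nu}\mathbf{1}_4$ kill the cross terms while the diagonal terms give $\square_q$. The difference is only in presentation. You symmetrize over all index pairs at once, with the minus signs absorbed into $a_i$, whereas the paper handles the diagonal terms via $(\gamma^0)^2=\mathbf{1}_4$, $(\gamma^i)^2=-\mathbf{1}_4$ and pairs the mixed terms separately.
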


\begin{proof}
We compute $D^2$ explicitly. Write $D$ in compact form as
\begin{equation}\label{eq: D_compact}
D = \gamma^0 \alpha_0 \partial_t - \sum_{i=1}^3 \gamma^i \alpha_i \partial_i,
\end{equation}
where $\alpha_0 = \sqrt{|g^{00}|}$, $\alpha_i = \sqrt{|g^{ii}|}$ for $i=1,2,3$, and $\partial_t = \frac{\partial}{\partial t}$, $\partial_i = \frac{\partial}{\partial x^i}$.

Then
\begin{equation}\label{eq: D_square_expand}
D^2 = \left( \gamma^0 \alpha_0 \partial_t - \sum_{i=1}^3 \gamma^i \alpha_i \partial_i \right)^2.
\end{equation}
Expanding the square yields three types of terms: pure time, pure spatial, and mixed terms:
\begin{equation*}
D^2 = (\gamma^0)^2 \alpha_0^2 \partial_t^2 + \sum_{i=1}^3 (\gamma^i)^2 \alpha_i^2 \partial_i^2 
+ \sum_{\mu \neq \nu} \gamma^\mu \gamma^\nu \alpha_\mu \alpha_\nu \partial_\mu \partial_\nu,
\end{equation*}
where $\alpha_0 = \sqrt{|g^{00}|}$ and $\alpha_i = \sqrt{|g^{ii}|}$, and the mixed sum runs over all ordered pairs with $\mu \neq \nu$.

Using the Clifford algebra relations (\ref{eq: clifford}), we have
\begin{equation}\label{eq: gamma_squares}
(\gamma^0)^2 = \mathbf{1}_4, \qquad (\gamma^i)^2 = -\mathbf{1}_4 \quad (i=1,2,3).
\end{equation}
Noting that $\alpha_0^2 = |g^{00}|$ and $\alpha_i^2 = |g^{ii}|$, the diagonal terms become
\begin{align}
(\gamma^0)^2 \alpha_0^2 \partial_t^2 &= |g^{00}| \partial_t^2 \, \mathbf{1}_4, \label{eq: time_term} \\
\sum_{i=1}^3 (\gamma^i)^2 \alpha_i^2 \partial_i^2 &= -\sum_{i=1}^3 |g^{ii}| \partial_i^2 \, \mathbf{1}_4. \label{eq: spatial_terms}
\end{align}

For the mixed terms ($\mu \neq \nu$), observe the following:
\begin{itemize}
    \item $\gamma^\mu \gamma^\nu$ is antisymmetric under exchange of $\mu$ and $\nu$, i.e., $\gamma^\nu \gamma^\mu = -\gamma^\mu \gamma^\nu$ for $\mu \neq \nu$.
    \item $\partial_\mu \partial_\nu$ is symmetric, i.e., $\partial_\nu \partial_\mu = \partial_\mu \partial_\nu$.
    \item The product $\alpha_\mu \alpha_\nu$ is symmetric, i.e., $\alpha_\nu \alpha_\mu = \alpha_\mu \alpha_\nu$.
\end{itemize}
Therefore, for each pair $(\mu,\nu)$ with $\mu \neq \nu$, the sum of the two terms $\gamma^\mu \gamma^\nu \alpha_\mu \alpha_\nu \partial_\mu \partial_\nu + \gamma^\nu \gamma^\mu \alpha_\nu \alpha_\mu \partial_\nu \partial_\mu$ simplifies to
\begin{align}
&\gamma^\mu \gamma^\nu \alpha_\mu \alpha_\nu \partial_\mu \partial_\nu + \gamma^\nu \gamma^\mu \alpha_\mu \alpha_\nu \partial_\mu \partial_\nu \notag \\
&= (\gamma^\mu \gamma^\nu + \gamma^\nu \gamma^\mu) \alpha_\mu \alpha_\nu \partial_\mu \partial_\nu \notag \\
&= 2 \eta^{\mu\nu} \mathbf{1}_4 \, \alpha_\mu \alpha_\nu \partial_\mu \partial_\nu. \label{eq: mixed_terms_sum}
\end{align}
Since $\eta^{\mu\nu} = 0$ for $\mu \neq \nu$ (by the definition of the Minkowski metric in (\ref{eq: clifford})), each mixed term vanishes identically. Hence
\begin{equation}\label{eq: mixed_terms_cancel}
\sum_{\mu \neq \nu} \gamma^\mu \gamma^\nu \alpha_\mu \alpha_\nu \partial_\mu \partial_\nu = 0.
\end{equation}

Thus, the mixed contributions cancel completely, and we are left with only the diagonal terms from (\ref{eq: time_term}) and (\ref{eq: spatial_terms}):
\begin{align}
D^2 &= \left( |g^{00}| \frac{\partial^2}{\partial t^2} - \sum_{i=1}^3 |g^{ii}| \frac{\partial^2}{\partial (x^i)^2} \right) \mathbf{1}_4 \notag \\
&= \left( |g^{00}| \frac{\partial^2}{\partial t^2} - |g^{11}| \frac{\partial^2}{\partial x^2} - |g^{22}| \frac{\partial^2}{\partial y^2} - |g^{33}| \frac{\partial^2}{\partial z^2} \right) \mathbf{1}_4 \label{eq: D_square_result} \\
&= \square_q \, \mathbf{1}_4. \notag
\end{align}

This completes the proof. 
\end{proof}

\subsection{The $q$-Dirac operator}
In the following examples, we use the operator $D$ defined in \eqref{eq: Dirac} (with ordinary derivatives) for simplicity. The generalization to $D_q$ with $q$-derivatives follows by replacing $\partial_\mu$ with $\partial_\mu^{(q)}$ that are compatible with the $q$-commutation relations of $M_1$ and $M_2$.

\begin{definition}[$q$-Dirac operator]
Let $\partial_t^{(q)}$, $\partial_x^{(q)}$, $\partial_y^{(q)}$, $\partial_z^{(q)}$ be $q$-derivatives satisfying
\[
\partial_x^{(q)} x = 1 + q x \partial_x^{(q)},
\]
and analogously for the other coordinates. The \textbf{$q$-Dirac operator} is defined by
\begin{equation}\label{eq: q-Dirac}
D_q := \gamma^0 \sqrt{|g^{00}|} \, \partial_t^{(q)} - \gamma^x \sqrt{|g^{11}|} \, \partial_x^{(q)} - \gamma^y \sqrt{|g^{22}|} \, \partial_y^{(q)} - \gamma^z \sqrt{|g^{33}|} \, \partial_z^{(q)}.
\end{equation}
\end{definition}

A detailed analysis of $D_q^2$ is beyond the scope of this work and will be addressed in a future publication.

\subsection{The $q$-deformed Dirac equation in 1+1 dimensions}

To illustrate the physical relevance of our construction, we consider the simplified case of one spatial dimension \cite{Smirnov-Farias2016, Ghosh-Roy2020}. Take $g^{00}=1$, $g^{11}=q^2$, and set $g^{22}=g^{33}=0$. The $q$-Dirac operator reduces to
\[
D_q^{(1+1)} = \gamma^0 \partial_t^{(q)} - \gamma^x q \partial_x^{(q)}.
\]

\begin{proposition}[$q$-deformed dispersion relation]
The $q$-Dirac equation $D_q^{(1+1)} \psi = 0$ admits plane-wave solutions of the form $\psi(t,x) = e_q^{i(\omega t - kx)} u$, where $e_q(z) = \sum_{n=0}^\infty \frac{z^n}{[n]_q!}$ is the $q$-exponential function. The frequency $\omega$ and wave number $k$ satisfy the $q$-deformed dispersion relation
\[
\omega^2 = q^2 k^2.
\]
In the limit $q \to 1$, the classical dispersion relation $\omega^2 = k^2$ is recovered.
\end{proposition}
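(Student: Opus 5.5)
The plan is to reduce the $q$-Dirac equation to an algebraic condition on the spinor $u$, and then use the Clifford relations \eqref{eq: clifford} to extract the dispersion relation, in the same way the proof of Theorem~\ref{factorization} removes the mixed terms.

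The first step is to fix the action of the $q$-derivatives on the $q$-exponential. With $e_q(z)=\sum_n z^n/[n]_q!$, we need a $q$-derivative satisfying $\partial^{(q)} z^n=[n]_q z^{n-1}$. The symmetric Jackson derivative $\partial^{(q)}f(z)=\frac{f(qz)-f(q^{-1}z)}{(q-q^{-1})z}$ does this, and it is compatible with the normalization $[n]_q$ of Section~\ref{algebras}. Term-by-term differentiation then gives $\partial_z^{(q)}e_q(az)=a\,e_q(az)$.

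The main obstacle is that $e_q^{i(\omega t-kx)}$ is a function of the single combination $s=\omega t-kx$. A $q$-derivative does not obey the ordinary chain rule, so $\partial_t^{(q)}e_q(is)$ is not obviously $i\omega\,e_q(is)$. I would handle this by interpreting the ansatz as a factorized product $e_q(i\omega t)\,e_q(-ikx)$, with $\partial_t^{(q)}$ and $\partial_x^{(q)}$ acting on separate variables. This is the natural reading, since the $q$-derivatives in the definition of $D_q$ act coordinatewise. With that reading,
\[
\partial_t^{(q)}\psi = i\omega\,\psi, \qquad \partial_x^{(q)}\psi = -ik\,\psi .
\]
If one insists on a single exponent, the fallback is to note that the same identities hold to first order in each variable as $q\to1$. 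I would prefer the factorized interpretation and state it explicitly.

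Substituting, $D_q^{(1+1)}\psi = i(\omega\gamma^0 + qk\gamma^x)u\,e_q(\cdots)$. Since $e_q$ is nonzero near the origin, the equation becomes the matrix condition $(\omega\gamma^0 + qk\gamma^x)u=0$. Now apply $(\omega\gamma^0 + qk\gamma^x)$ to this condition. By \eqref{eq: clifford} we have $(\gamma^0)^2=\mathbf 1_4$, $(\gamma^x)^2=-\mathbf 1_4$ and $\{\gamma^0,\gamma^x\}=0$, so
\[
(\omega\gamma^0+qk\gamma^x)^2=(\omega^2-q^2k^2)\mathbf 1_4 .
\]
A nonzero $u$ therefore forces $\omega^2=q^2k^2$. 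This is exactly the computation in Theorem~\ref{factorization} with $\alpha_0=1$ and $\alpha_1=\sqrt{|g^{11}|}=q$, where we may take $q>0$ so that $\sqrt{q^2}=q$.

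Conversely, if $\omega^2=q^2k^2$, the matrix $\omega\gamma^0+qk\gamma^x$ squares to zero and is nonzero (assuming $\omega\neq0$), so it has a nontrivial kernel. Any $u$ in that kernel gives a solution, which establishes existence.

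Finally, letting $q\to1$ gives $[n]_q\to n$ and $e_q\to\exp$, and the relation becomes $\omega^2=k^2$.
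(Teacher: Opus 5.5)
Your proposal is correct and follows the paper's own argument: the eigenfunction property $\partial^{(q)}e_q(az)=a\,e_q(az)$ reduces $D_q^{(1+1)}\psi=0$ to $(\omega\gamma^0+qk\gamma^x)u=0$, and squaring with the Clifford relations forces $\omega^2=q^2k^2$. You are more careful than the paper in reading the ansatz as $e_q(i\omega t)\,e_q(-ikx)$, which is genuinely needed because $q$-derivatives obey no chain rule, and in proving that a nonzero $u$ exists (the assumption $\omega\neq 0$ is unnecessary there, since a matrix squaring to zero is always singular).

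You should, however, state explicitly that your symmetric Jackson derivative does not satisfy the relation $\partial_x^{(q)}x=1+qx\,\partial_x^{(q)}$ used to define $D_q$; it gives $\partial^{(q)}(xf)=qx\,\partial^{(q)}f+f(q^{-1}x)$ instead. Conversely, a derivative obeying that relation sends $x^n$ to $\frac{q^n-1}{q-1}x^{n-1}$, so the symmetric $e_q$ is not its eigenfunction. This mismatch is already present in the paper, and the dispersion relation is unaffected once the exponential is matched to the derivative.
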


\begin{proof}
Using the property $\partial_x^{(q)} e_q(ax) = a e_q(ax)$ and similarly for $\partial_t^{(q)}$, we substitute the plane-wave ansatz into $D_q^{(1+1)} \psi = 0$. The gamma matrices satisfy $(\gamma^0)^2 = 1$ and $(\gamma^x)^2 = -1$, and the condition for a nontrivial solution $u$ yields $\omega^2 = q^2 k^2$. 
\end{proof}

This simple example demonstrates that the $q$-deformation parameter $q$ effectively modifies the speed of light, an effect reminiscent of quantum gravity phenomenology where Lorentz invariance is modified at high energies.

\section{Some Examples}\label{sec:examples}

In this section, we present concrete realizations of the Dirac operator $D$ defined in \eqref{eq: Dirac} (with ordinary derivatives)  arising from the various $q$-deformed Heisenberg algebras discussed throughout this work. 

\textbf{Remark on notation.} We emphasize that the operator $D^q$ defined below (quadratic $q$-Dirac operator) is distinct from the operator $D_q$ introduced in Section~\ref{Dirac}. The former comes from our previous work \cite{Jaramillo2024} and acts on quadratic relativistic invariants, while the latter is constructed from the deformed D'Alembertian. In the present section, we focus on realizations of $D$ (ordinary derivatives), which specialize to different $q$-deformed Dirac operators depending on the metric components. We begin by recalling the quadratic $q$-Dirac operator introduced in our previous work \cite{Jaramillo2024}, which serves as a unifying framework for the subsequent examples.

\subsection{The Quadratic $q$-Dirac Operator}

In a previous works \cite{Jaramillo2024, Jaramillo2025-int}, we introduced the \textbf{quadratic $q$-Dirac operator} acting on functions defined on quadratic relativistic invariant algebras. Let $f$ be a $\Psi$-valued $C^1$ function, where $\Psi$ is a Clifford algebra. The quadratic $q$-Dirac operator is defined as
\begin{equation}\label{eq: quadratic-q-Dirac}
D^{q} f := e_{x} \frac{\partial_{q} f}{\partial_{q} x} + e_{y} \frac{\partial_{q} f}{\partial_{q} y} + e_{z} \frac{\partial_{q} f}{\partial_{q} z},
\end{equation}
where $e_x, e_y, e_z$ are the generators of the Clifford algebra $Cl_{0,3}$ satisfying $e_i e_j + e_j e_i = -2\delta_{ij} e_0$, and the $q$-derivatives are defined by \cite{Jaramillo2025-int}
\[
\frac{\partial_{q} f}{\partial_{q} x} = \frac{f\bigl((x + q^2 e_0 u)u\bigr) - f(xu)}{q u},
\]
and analogously for $y$ and $z$ (see \cite[Section 3]{Jaramillo2024} for the complete list).

This operator generalizes the classical Dirac operator $D = \sum_{i=1}^3 e_i \partial_i$ by incorporating a $q$-deformation that arises from the quadratic relativistic invariants
\[
x_q^2 = u^2 - q^2 x^2 - y^2 - z^2,\quad y_q^2 = u^2 - x^2 - q^2 y^2 - z^2,\quad z_q^2 = u^2 - x^2 - y^2 - q^2 z^2,
\]
which deform the Minkowski norm $u^2 - x^2 - y^2 - z^2$ in a $q$-dependent manner.

\subsection{Relation to the Metric-Deformed Heisenberg Algebras}

We now establish a direct connection between the quadratic $q$-Dirac operator $D^q$ and the metric-deformed Heisenberg algebras $M_1$ and $M_2$ introduced in Section~\ref{sec:al-met}.

\begin{proposition}[Compatibility with $M_1$ and $M_2$]
Let $D^q$ be the quadratic $q$-Dirac operator defined in \eqref{eq: quadratic-q-Dirac}. Under the identifications
\[
e_x \longleftrightarrow \gamma^x,\quad e_y \longleftrightarrow \gamma^y,\quad e_z \longleftrightarrow \gamma^z,\qquad
\frac{\partial_q}{\partial_q x} \longleftrightarrow i p_x,
\]
the operator $D^q$ satisfies
\begin{equation}\label{eq: commutation-Dq}
[D^q, x] = -i \gamma^x .
\end{equation}
This commutation relation coincides with the $q$-deformed Heisenberg relation appearing in $M_1$ when $g^{11}=q$ and $g^{22}=1$, and in $M_2$ when $g^{22}=0$ and $g^{33}=q$.
\end{proposition}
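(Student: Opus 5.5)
The plan is to compute the commutator $[D^q,x]$ term by term and reduce it to the canonical relation attached to the $x$-direction. Under the stated identifications, $D^q$ becomes
\[
D^q = i\left(\gamma^x p_x + \gamma^y p_y + \gamma^z p_z\right).
\]
The position generator $x$ is treated as a scalar operator, so it commutes with the Clifford generators $\gamma^i$. Hence
\[
[D^q,x] = i\gamma^x [p_x,x] + i\gamma^y [p_y,x] + i\gamma^z [p_z,x].
\]
The whole argument therefore rests on the mixed commutators $[p_j,x]$.

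The next step is to show that the mixed terms vanish. In the specializations named in the statement, the relevant off-diagonal relations of $M_1$ and $M_2$ are homogeneous with vanishing right-hand sides. The $M_2$ relations for $(x,p_y)$ and $(x,p_z)$ are governed by $g^{03}$ and $g^{02}$, and we set both to zero. With the diagonal coefficients chosen as in the statement, these relations should force $[p_y,x]=[p_z,x]=0$ in the undeformed limit. I would check this relation by relation from \eqref{eq:relations2} and \eqref{eq:comm-relation2}, imposing $g^{0i}=0$ as in Example~\ref{ex:q-h}. This leaves
\[
[D^q,x] = i\gamma^x [p_x,x].
\]

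The main obstacle is to extract $[p_x,x] = -1$ from the $x$-$p_x$ relation of $M_1$, which gives $[D^q,x] = -i\gamma^x$. With $g^{11}=q$ and $g^{22}=1$, that relation reads
\[
g^{00} x p_x + q\, p_x x = -i .
\]
This is a $q$-deformed relation rather than a plain commutator. It reduces to $[x,p_x]=i$ only after a suitable choice of $g^{00}$ (namely $g^{00}=-q$, up to rescaling) combined with the normalization $q\to1$. Alternatively, one can absorb the factor $q$ into the $q$-derivative $\partial_q/\partial_q x$ via its difference-quotient definition.

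I would first try the second route. Using the definition of the $q$-derivative from \cite{Jaramillo2025-int}, I would compute $\frac{\partial_q}{\partial_q x}(xf) - x\frac{\partial_q f}{\partial_q x}$ directly and show that it equals $f$ to the order needed. This produces the $-i\gamma^x$ on the right-hand side, and matching coefficients then identifies the admissible parameters in $M_1$ and in $M_2$. For $M_2$ with $g^{22}=0$ and $g^{33}=q$, the analogous check uses the $(x,p_y)$-type relation as in Example~\ref{ex:gen-embedding}.

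The identification of the commutator with the algebra relations is the delicate step. It likely holds only after this rescaling or normalization, so I would state the precise parameter choices explicitly rather than rely on the literal relations.
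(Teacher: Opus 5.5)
There is a genuine gap at exactly the step that is supposed to produce the right-hand side: neither of your routes can yield $-i\gamma^x$. The substitution $\frac{\partial_q}{\partial_q x}=ip_x$ is only a renaming, so it cannot change the value of $[D^q,x]$.

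On your second route, once the mixed terms are disposed of, the commutator is $e_x\bigl[\frac{\partial_q}{\partial_q x},x\bigr]$. If that bracket is $1$, you get $\gamma^x$, not $-i\gamma^x$.

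On your first route you need $[p_x,x]=-1$. But the $M_1$ relation $g^{00}xp_x+g^{11}p_xx=-ig^{22}$ has real coefficients and a purely imaginary right-hand side. Whenever it reduces to a commutator, it therefore makes $[p_x,x]$ an imaginary scalar, never $-1$. Indeed your own reduction gives $[x,p_x]=i$, i.e.\ $[p_x,x]=-i$, and then $[D^q,x]=i\gamma^x(-i)=\gamma^x$. So the two halves of your plan contradict each other.

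Moreover, the bracket is not $1$ ``to the order needed''. The rule $\frac{\partial_q}{\partial_q x}x=1+qx\frac{\partial_q}{\partial_q x}$ gives
\[
\Bigl[\frac{\partial_q}{\partial_q x},x\Bigr]=1+(q-1)\,x\,\frac{\partial_q}{\partial_q x}.
\]
What can honestly be proved is therefore $[D^q,x]=\gamma^x\bigl(1+(q-1)x\frac{\partial_q}{\partial_q x}\bigr)$, which equals $\gamma^x$ only at $q=1$. The exact $q$-statement is $\bigl(\gamma^x\frac{\partial_q}{\partial_q x}\bigr)x-qx\bigl(\gamma^x\frac{\partial_q}{\partial_q x}\bigr)=\gamma^x$, equivalently $p_xx-qxp_x=-i$.

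The paper's proof is essentially your second route and has the same defect. It computes $e_x\bigl[\frac{\partial_q}{\partial_q x},x\bigr]=\gamma^x$ and then attributes a factor $-i$ to the identification $p_x=-i\frac{\partial_q}{\partial_q x}$. That is not a valid step, so following it will not close the gap.

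Your other steps also fail as stated.

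\emph{Mixed terms.} $M_1$ contains no relation at all between $x$ and $p_y$ or $p_z$, so the vanishing of $[p_y,x]$ and $[p_z,x]$ cannot be checked ``relation by relation'' there. In $M_2$, $g^{03}=0$ gives $g^{00}xp_y+g^{33}p_yx=0$, which is a commutator only if $g^{00}=-g^{33}$. With $g^{22}=g^{02}=0$, the $(x,p_z)$ relation collapses to $g^{00}xp_z=0$. The mixed terms have to be killed analytically (the $q$-derivatives in $y,z$ do not act on the $x$-variable), which is what the paper tacitly assumes.

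\emph{Matching with $M_2$.} $M_2$ has no $x$--$p_x$ relation, so the $(x,p_y)$ relation cannot determine $[p_x,x]$.

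\emph{Matching with $M_1$.} The relation $p_xx-qxp_x=-i$ is reproduced exactly by $g^{00}=-q$, $g^{11}=g^{22}=1$, with no $q\to1$ normalization. The stated choice $g^{11}=q$, $g^{22}=1$ matches it only at $q=1$. Your instinct to fix the parameters explicitly is right, but the identity $[D^q,x]=-i\gamma^x$ itself must be corrected as well.
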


\begin{proof}
From \cite[Section 3]{Jaramillo2024}, the $q$-derivative satisfies
\[
\frac{\partial_q}{\partial_q x} x = 1 + q x \frac{\partial_q}{\partial_q x}.
\]
Rearranging gives
\[
\frac{\partial_q}{\partial_q x} x - q x \frac{\partial_q}{\partial_q x} = 1.
\]
With the identification $p_x = -i \partial_q/\partial_q x$, we obtain
\[
p_x x - q x p_x = -i.
\]
Substituting $D^q$ from \eqref{eq: quadratic-q-Dirac} and using $[e_x, x]=0$, we compute
\[
[D^q, x] = e_x \left[\frac{\partial_q}{\partial_q x}, x\right] = e_x (1) = \gamma^x.
\]
The factor $-i$ follows from the identification $p_x = -i\partial_q/\partial_q x$, which yields $[D^q, x] = -i\gamma^x$. This matches the $q$-Heisenberg relations in $M_1$ and $M_2$ under the stated metric specializations. 
\end{proof}

\begin{theorem}[Factorization of the Quadratic $q$-Dirac Operator]\label{thm:quadratic-factorization}
Let $D^q$ be the quadratic $q$-Dirac operator defined in \eqref{eq: quadratic-q-Dirac} and let $f$ be a function of the quadratic relativistic invariants $x_q^2, y_q^2, z_q^2$. Then
\begin{equation}\label{eq: quadratic-Dq-squared}
(D^q)^2 f = -\left( \frac{\partial_q^2}{\partial_q x^2} + \frac{\partial_q^2}{\partial_q y^2} + \frac{\partial_q^2}{\partial_q z^2} \right) f \,\mathbf{1}_4.
\end{equation}
Moreover, when $f$ depends only on the quadratic invariants, this reduces to
\[
(D^q)^2 f = \square_q f \,\mathbf{1}_4,
\]
where $\square_q$ is the deformed D'Alembertian operator defined in \eqref{eq:operator KG} with metric components $g^{00}=1$, $g^{11}=q^2$, $g^{22}=q^2$, $g^{33}=q^2$.
\end{theorem}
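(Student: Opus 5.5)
The first identity I would prove by imitating the proof of Theorem~\ref{factorization}, with the gamma matrices replaced by the Clifford generators $e_x,e_y,e_z$ of $Cl_{0,3}$, which satisfy $e_ie_j+e_je_i=-2\delta_{ij}e_0$. Write $\partial^q_i$ for $\partial_q/\partial_q x^i$. Squaring \eqref{eq: quadratic-q-Dirac} gives
\[
(D^q)^2 f=\sum_{i} e_i^2\,\partial^q_i\partial^q_i f+\sum_{i\neq j} e_ie_j\,\partial^q_i\partial^q_j f .
\]
This uses the fact that the $e_i$ are constant Clifford elements, so they commute with the scalar $q$-difference operators. The relation $e_i^2=-e_0$ turns the diagonal terms into $-\sum_i (\partial^q_i)^2 f$. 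In each mixed pair $(i,j),(j,i)$ the terms combine into $e_ie_j(\partial^q_i\partial^q_j-\partial^q_j\partial^q_i)f$, because $e_je_i=-e_ie_j$. These vanish once we know that $q$-derivatives in different variables commute.

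The commutation is where I expect the real work, and I would check it directly from the difference quotient. The operator $\partial^q_x$ shifts only the $x$-slot (through $x\mapsto x+q^2e_0u$) and divides by $qu$, and $\partial^q_y$ does the same in the $y$-slot. Since the shifts act on independent arguments, applying them in either order gives the same two-variable second difference quotient. Hence $[\partial^q_i,\partial^q_j]=0$ for $i\ne j$ on $C^1$ functions. Identifying $e_0$ with $\mathbf 1_4$ under the correspondence $e_i\leftrightarrow\gamma^i$ of the preceding Proposition then yields \eqref{eq: quadratic-Dq-squared}.

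For the second statement I would take $f=F(x_q^2,y_q^2,z_q^2)$ and compute each $(\partial^q_i)^2 f$ by the chain rule in the $q$-sense. The invariants carry the coefficient $q^2$ on precisely one spatial square, so a single $q$-derivative in $x^i$ should produce the factor $q^2$ attached to that invariant. I would then regroup $-\sum_i(\partial^q_i)^2 f$ as $-q^2(\partial_x^2+\partial_y^2+\partial_z^2)f$ plus the time part. The time part comes from the $u^2$ term in each invariant and contributes $g^{00}=1$.

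The main obstacle is making this identification rigorous. The $q$-difference quotient of a function of $x_q^2$ is not exactly a rescaled derivative; it is only so to leading order, or in the limit where the difference quotient is read as the derivative along the deformed invariant. So I would first try to show the formula exactly on monomials in $x_q^2,y_q^2,z_q^2$ using the $q$-Leibniz rule $\partial^q_x x=1+qx\partial^q_x$. If that fails, I would state the reduction for $f$ depending on the invariants through their derivative form, matching the metric $g^{00}=1$, $g^{ii}=q^2$ in \eqref{eq:operator KG} term by term.
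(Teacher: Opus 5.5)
Your argument for the first identity is the paper's own: expand the square, use $e_i^2=-e_0$, cancel the mixed terms by $e_je_i=-e_ie_j$ together with commutation of $q$-derivatives in different variables, and identify $e_0$ with $\mathbf{1}_4$. Your explicit check that $\partial^q_x,\partial^q_y,\partial^q_z$ commute supplies a step the paper only asserts.

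The gap is in the reduction to $\square_q$, exactly where you say the time part comes from the $u^2$ term. The operator $-\sum_i(\partial^q_i)^2$ differentiates, or takes differences, only in $x,y,z$. Under it $u$ is just a parameter, so no chain-rule bookkeeping can produce the term $|g^{00}|\partial_t^2 f=\partial_t^2 f$ of $\square_q f$. With $u=t$, as your plan assumes, the claimed identity is in fact false. Already at $q=1$, where the $q$-derivatives are ordinary partials, $f=x_q^2=u^2-x^2-y^2-z^2$ gives $-(\partial_x^2+\partial_y^2+\partial_z^2)f=6$, but $\square f=2+6=8$. For general $q$, the rule $\partial^q_x x=1+qx\partial^q_x$ gives $\partial^q_x x^2=(1+q)x$, and the two sides become $(1+q)(q^2+2)$ and $2(1+q^2)^2$.

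The regrouping step fails independently. The chain rule through $x_q^2,y_q^2,z_q^2$ attaches $q^2$ only to the $\partial F/\partial(x_q^2)$ contribution of an $x$-derivative. Those coefficients are already contained in $\partial_x^2 f$, so the chain rule cannot produce a common factor $q^2$ in front of $\partial_x^2 f+\partial_y^2 f+\partial_z^2 f$. A global $q^2$ could only come from the difference operator itself: a step $q^2u$ divided by $qu$ behaves like $q\,\partial_x$ to leading order in the step. That is approximate and has nothing to do with $f$ being a function of the invariants.

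So the second assertion cannot be proved as stated. At best you get $-\sum_i(\partial^q_i)^2 f\approx -q^2(\partial_x^2+\partial_y^2+\partial_z^2)f$, which matches $\square_q f$ only under an extra hypothesis such as $\partial_t^2 f=0$. The paper gives no argument for this step either; it only points to the author's earlier work. Your fallback of matching the metric term by term would simply assume the conclusion. You need either the additional hypothesis together with an explicit limiting interpretation of $\partial^q_i$, or to restrict the claim to the first identity.
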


\begin{proof}
Compute $(D^q)^2$ explicitly:
\[
(D^q)^2 = \left( e_x \frac{\partial_q}{\partial_q x} + e_y \frac{\partial_q}{\partial_q y} + e_z \frac{\partial_q}{\partial_q z} \right)^2.
\]
Expanding and using $e_i e_j + e_j e_i = -2\delta_{ij} e_0$:
\[
(D^q)^2 = e_x^2 \frac{\partial_q^2}{\partial_q x^2} + e_y^2 \frac{\partial_q^2}{\partial_q y^2} + e_z^2 \frac{\partial_q^2}{\partial_q z^2} + \sum_{i \neq j} e_i e_j \frac{\partial_q}{\partial_q x^i} \frac{\partial_q}{\partial_q x^j}.
\]
Since $e_i^2 = -e_0$ for $i=x,y,z$ and the mixed terms vanish due to antisymmetry of $e_i e_j$ ($i \neq j$) applied to symmetric second derivatives, we obtain \cite{Teodoro-2014}
\[
(D^q)^2 = -e_0 \left( \frac{\partial_q^2}{\partial_q x^2} + \frac{\partial_q^2}{\partial_q y^2} + \frac{\partial_q^2}{\partial_q z^2} \right).
\]
Identifying $e_0$ with the identity $\mathbf{1}_4$ and using the relation between $q$-derivatives and the D'Alembertian from \cite[Section 4]{Jaramillo2024} yields the result. 
\end{proof}

\subsection{Realizations of the $q$-Dirac Operator}

We now present explicit realizations of the $q$-Dirac operator corresponding to the various $q$-deformed Heisenberg algebras discussed in Section~\ref{examples}. In each case, we specify the metric components and the resulting Dirac operator.

\begin{example}[New $q$-Heisenberg algebra \cite{jaramillo2025new}]\label{ex:new-dirac}
The new $q$-Heisenberg algebra, introduced in Example~\ref{ex:new-heis}, provides a rich family of metric-deformed structures. We present below a systematic classification of the metric components and their associated Dirac operators, organized according to the three defining relations (\ref{eq:new1}), (\ref{eq:new2}), and (\ref{eq:new3}) from \cite{jaramillo2025new}. Each table specifies the indices, the corresponding metric components $g^{\mu\nu}$, and the resulting Dirac operator $D$ obtained from the general expression \eqref{eq: Dirac}.

\medskip

\noindent \textbf{1.  Relation (\ref{eq:new1}).}  
Table~\ref{tab:metric-new1} summarizes the metric configurations derived from the first defining relation.

\begin{table}[H]
\centering
\caption{Metric components for relation (\ref{eq:new1})}
\label{tab:metric-new1-realizations}
\begin{tabular}{|c|c|c|c|c|c|c|}
\hline
$\alpha$ & $\beta$ & $g^{00}$ & $g^{11}$ & $g^{22}$ & $g^{33}$ & Dirac operator \\
\hline
1 & 1 & $1$ & $-q^{-n}$ & $q^{n-1}\Psi$ & $0$ & $\gamma^{0}\partial_t - \gamma^{x} q^{-n/2}\partial_x - \gamma^{y} q^{(n-1)/2}\Psi^{1/2}\partial_y$ \\
\hline
1 & 2 & $1$ & $0$ & $0$ & $-q^{n}$ & $\gamma^{0}\partial_t - \gamma^{z} q^{n/2}\partial_z$ \\
\hline
1 & 3 & $1$ & $0$ & $-q^{n}$ & $0$ & $\gamma^{0}\partial_t - q^{n/2}\gamma^{y}\partial_y$ \\
\hline
2 & 1 & $1$ & $0$ & $0$ & $-q^{n}$ & $\gamma^{0}\partial_t - q^{n/2}\gamma^{z}\partial_z$ \\
\hline
2 & 2 & $1$ & $0$ & $-q^{n}$ & $q^{n-1}\Psi$ & $\gamma^{0}\partial_t - \gamma^{y} q^{n/2}\partial_y - \gamma^{z} q^{(n-1)/2}\partial_z$ \\
\hline
2 & 3 & $1$ & $-q^{n}$ & $0$ & $0$ & $\gamma^{0}\partial_t - q^{n/2}\gamma^{x}\partial_x$ \\
\hline
3 & 1 & $1$ & $0$ & $0$ & $-q^{n}$ & $\gamma^{0}\partial_t - q^{n/2}\gamma^{z}\partial_z$ \\
\hline
3 & 2 & $1$ & $-q^{n}$ & $0$ & $0$ & $\gamma^{0}\partial_t - q^{n/2}\gamma^{x}\partial_x$ \\
\hline
3 & 3 & $1$ & $q^{n-1}\Psi$ & $0$ & $-q^{n}$ & $\gamma^{0}\partial_t - q^{(n-1)/2}\Psi^{1/2}\gamma^{x}\partial_x - q^{n/2}\gamma^{z}\partial_z$ \\
\hline
\end{tabular}
\end{table}

\medskip

\noindent \textbf{2.  Relation (\ref{eq:new2}) associated to $M_1$.}  
Table~\ref{tab:metric-new2-m1} presents the metric components and Dirac operators arising from the second defining relation when realized in the algebra $M_1$.

\begin{table}[H]
\centering
\caption{Metric components for relation (\ref{eq:new2}) associated to $M_1$}
\label{tab:metric-new2-m1}
\begin{tabular}{|c|c|c|c|c|c|c|}
\hline
$\alpha$ & $\lambda$ & $g^{00}$ & $g^{11}$ & $g^{22}$ & $g^{33}$ & Dirac operator \\
\hline
1 & 2 & $0$ & $0$ & $q^{m}$ & $1$ & $-\gamma^{y} q^{m/2}\partial_y - \gamma^{z}\partial_z$ \\
\hline
1 & 3 & $0$ & $-1$ & $-q^{m}$ & $0$ & $-\gamma^{x}\partial_x - \gamma^{y} q^{m/2}\partial_y$ \\
\hline
2 & 1 & $0$ & $0$ & $-1$ & $-q^{m}$ & $-\gamma^{y}\partial_y - q^{m/2}\gamma^{z}\partial_z$ \\
\hline
2 & 3 & $0$ & $1$ & $0$ & $q^{m}$ & $-\gamma^{x}\partial_x - q^{m/2}\gamma^{z}\partial_z$ \\
\hline
3 & 1 & $0$ & $q^{m}$ & $1$ & $0$ & $-q^{m/2}\gamma^{x}\partial_x - \gamma^{y}\partial_y$ \\
\hline
3 & 2 & $0$ & $-q^{m}$ & $0$ & $-1$ & $-\gamma^{x} q^{m/2}\partial_x - \gamma^{z}\partial_z$ \\
\hline
\end{tabular}
\end{table}

\medskip

\noindent \textbf{3.  Relation (\ref{eq:new2}) associated to $M_2$.}  
Table~\ref{tab:metric-new2-m2} shows the corresponding data for the algebra $M_2$.

\begin{table}[H]
\centering
\caption{Metric components for relation (\ref{eq:new2}) associated to $M_2$}
\label{tab:metric-new2-m2}
\begin{tabular}{|c|c|c|c|c|c|c|}
\hline
$\alpha$ & $\lambda$ & $g^{00}$ & $g^{11}$ & $g^{22}$ & $g^{33}$ & Dirac operator \\
\hline
1 & 2 & $q^{m}$ & $0$ & $0$ & $-1$ & $-\gamma^{0} q^{m/2}\partial_t - \gamma^{z}\partial_z$ \\
\hline
1 & 3 & $q^{m}$ & $0$ & $-1$ & $0$ & $\gamma^{0} q^{m/2}\partial_t - \gamma^{y}\partial_y$ \\
\hline
2 & 1 & $1$ & $0$ & $0$ & $q^{m}$ & $\gamma^{0}\partial_t - q^{m/2}\gamma^{z}\partial_z$ \\
\hline
2 & 3 & $q^{m}$ & $-1$ & $0$ & $0$ & $\gamma^{0} q^{m/2}\partial_t - \gamma^{x}\partial_x$ \\
\hline
3 & 1 & $-1$ & $0$ & $q^{m}$ & $0$ & $\gamma^{0}\partial_t - \gamma^{y} q^{m/2}\partial_y$ \\
\hline
3 & 2 & $-1$ & $q^{m}$ & $0$ & $0$ & $\gamma^{0}\partial_t - \gamma^{x} q^{m/2}\partial_z$ \\
\hline
\end{tabular}
\end{table}

\medskip

\noindent \textbf{4.  Relation (\ref{eq:new3}) associated to $M_1$ and $M_2$.}  
Table~\ref{tab:metric-new3} collects the metric configurations derived from the third defining relation, which are compatible with both $M_1$ and $M_2$.

\begin{table}[H]
\centering
\caption{Metric components for relation (\ref{eq:new3}) associated to $M_1$ and $M_2$}
\label{tab:metric-new3}
\begin{tabular}{|c|c|c|c|c|c|c|}
\hline
$\lambda$ & $\beta$ & $g^{00}$ & $g^{11}$ & $g^{22}$ & $g^{33}$ & Dirac operator \\
\hline
1 & 1 & $q^{l}$ & $-q^{l+1}$ & $-\Phi$ & $0$ & $\gamma^{0} q^{l/2}\partial_t - \gamma^{x} q^{(l+1)/2}\partial_x - \gamma^{y} \Phi^{1/2}\partial_y$ \\
\hline
1 & 2 & $q^{l}$ & $0$ & $0$ & $-q^{l+1}$ & $\gamma^{0} q^{l/2}\partial_t - \gamma^{z} q^{(l+1)/2}\partial_z$ \\
\hline
1 & 3 & $q^{l}$ & $0$ & $-q^{l+1}$ & $0$ & $\gamma^{0} q^{l/2}\partial_t - q^{(l+1)/2}\gamma^{y}\partial_y$ \\
\hline
2 & 2 & $q^{l}$ & $0$ & $-q^{l+1}$ & $-\Phi$ & $\gamma^{0} q^{l/2}\partial_t - \gamma^{y} q^{(l+1)/2}\partial_y - \gamma^{z} \Phi^{l/2}\partial_z$ \\
\hline
2 & 3 & $q^{l}$ & $-q^{l+1}$ & $0$ & $0$ & $\gamma^{0} q^{l/2}\partial_t - \gamma^{x} q^{(l+1)/2}\partial_x$ \\
\hline
3 & 3 & $q^{l}$ & $-\Phi$ & $0$ & $-q^{l+1}$ & $\gamma^{0} q^{l/2}\partial_t - \gamma^{x} \Phi^{1/2}\partial_x - \gamma^{z} q^{(l+1)/2}\partial_z$ \\
\hline
\end{tabular}
\end{table}

\medskip

\noindent \textbf{5.  A distinguished simple case.}  
A particularly simple and illustrative realization of the new $q$-Heisenberg algebra is obtained by taking the metric components
\[
g^{00}=1,\quad g^{11}=-q^{n},\quad g^{22}=-q,\quad g^{33}=-1,
\]
with all off-diagonal components set to zero. Substituting these into the general expression \eqref{eq: Dirac} yields the associated $q$-Dirac operator
\begin{equation}\label{eq: D-new}
D_{\text{new}} = \gamma^{0} \partial_t - \gamma^{x} \sqrt{q^{n}} \, \partial_x - \gamma^{y} \sqrt{q} \, \partial_y - \gamma^{z} \partial_z,
\end{equation}
where $\partial_t = \frac{\partial}{\partial t}$, $\partial_x = \frac{\partial}{\partial x}$, $\partial_y = \frac{\partial}{\partial y}$, $\partial_z = \frac{\partial}{\partial z}$. A direct computation shows that $D_{\text{new}}$ satisfies the factorization property
\[
D_{\text{new}}^2 = \square_q \, \mathbf{1}_4,
\qquad\text{with}\qquad
\square_q = \frac{\partial^2}{\partial t^2} - q^{n} \frac{\partial^2}{\partial x^2} - q \frac{\partial^2}{\partial y^2} - \frac{\partial^2}{\partial z^2}.
\]
This confirms that $D_{\text{new}}$ is a genuine $q$-deformed Dirac operator whose square recovers the corresponding $q$-deformed Klein–Gordon operator.
\end{example}
\begin{example}[$q$-generalized Heisenberg algebra \cite{Razavinia-Lopes2022}]\label{ex:gen-dirac}
For the $q$-generalized Heisenberg algebra with $f(\hat{h})=\hat{h}$, $g(\hat{h})=1$, we obtained in Example~\ref{ex:gen-embedding} the metric components
\[
g^{00}=-1,\quad g^{11}=1,\quad g^{22}=0,\quad g^{33}=-q,
\]
with all other off-diagonal components zero. The corresponding $q$-Dirac operator is
\begin{equation}\label{eq: D-gen}
D_{\text{gen}} = \gamma^{0} \frac{\partial}{\partial t} - \gamma^{x} \frac{\partial}{\partial x} - \gamma^{z} \sqrt{q} \, \frac{\partial}{\partial z}.
\end{equation}
Note that the $y$-direction is absent due to $g^{22}=0$, reflecting the fact that the $q$-generalized Heisenberg algebra involves only two noncommutative generators.
\end{example}

\begin{example}[$q$-$\hbar$ Heisenberg algebra \cite{Volovich-Arefeva-91}]\label{ex:qhbar-dirac}
The $q$-$\hbar$ Heisenberg algebra, introduced in Example~\ref{ex:q-h}, is characterized by the metric components listed in Table~\ref{tab:metric-qhbar-dirac}. For each choice of the spatial indices $j,k \in \{1,2,3\}$, the corresponding Dirac operator $D_{q-\hbar}$ is obtained by substituting the metric components into the general expression \eqref{eq: Dirac}.

\begin{table}[H]
\centering
\caption{Metric components and associated Dirac operators for the $q$-$\hbar$-deformed Heisenberg algebra}
\label{tab:metric-qhbar-dirac}
\begin{tabular}{|c|c|c|c|c|c|c|}
\hline
$j$ & $k$ & $g^{00}$ & $g^{11}$ & $g^{22}$ & $g^{33}$ & $D_{q-\hbar}$ \\
\hline
1 & 1 & $-q$ & $1$ & $q^{1/2}$ & $0$ & $\gamma^{0} \sqrt{q} \, \partial_t - \gamma^{x} \partial_x - \gamma^{y} q^{1/4} \partial_y$ \\
\hline
2 & 2 & $-q$ & $1$ & $0$ & $q^{1/2}$ & $\gamma^{0} \sqrt{q} \, \partial_t - \gamma^{x} \partial_x - \gamma^{z} q^{1/4} \partial_z$ \\
\hline
3 & 3 & $-q$ & $q^{1/2}$ & $0$ & $1$ & $\gamma^{0} \sqrt{q} \, \partial_t - q^{1/4} \gamma^{x} \partial_x - \gamma^{z} \partial_z$ \\
\hline
\end{tabular}
\end{table}

In the table, $\partial_t = \frac{\partial}{\partial t}$, $\partial_x = \frac{\partial}{\partial x}$, $\partial_y = \frac{\partial}{\partial y}$, $\partial_z = \frac{\partial}{\partial z}$, and $\gamma^0, \gamma^x, \gamma^y, \gamma^z$ are the Dirac gamma matrices satisfying the Clifford algebra \eqref{eq: clifford}. The absence of a spatial direction in some cases (e.g., $g^{22}=0$ or $g^{33}=0$) reflects the fact that the deformation affects only a subset of the spatial coordinates.

One verifies directly that each Dirac operator satisfies $D_{q-\hbar}^2 = \square_q \mathbf{1}_4$, with
\[
\square_q = q \frac{\partial^2}{\partial t^2} - \frac{\partial^2}{\partial x^2} - q^{1/2} \frac{\partial^2}{\partial y^2} \quad \text{or} \quad \square_q = q \frac{\partial^2}{\partial t^2} - \frac{\partial^2}{\partial x^2} - q^{1/2} \frac{\partial^2}{\partial z^2},
\]
depending on the case.
\end{example}

\begin{remark}[Unification]
The three examples above illustrate how the general $q$-Dirac operator $D$ defined in \eqref{eq: Dirac} specializes to different $q$-deformed Dirac operators depending on the underlying $q$-Heisenberg algebra. The metric components $g^{\mu\nu}$ serve as the unifying parameters that encode the deformation. In all cases, the factorization property $D^2 = \square_q \mathbf{1}_4$ holds, establishing a direct link between the $q$-deformed Dirac equation and the $q$-deformed Klein–Gordon equation.
\end{remark}
\begin{example}[Deformed Heisenberg algebra from metric coefficients \cite{Jar-Nec-Hal-2025}]\label{ex:deformed-spacetime}
Sylvester's theorem of inertia guarantees that any Lorentzian metric can be diagonalized. Starting from the general quadratic form
\[
g = |\alpha| \, dx^0 \otimes dx^0 - |\beta| \, dx \otimes dx - |\gamma| \, dy \otimes dy - |\delta| \, dz \otimes dz,
\]
we introduce the rescaled coordinates
\[
\chi = \sqrt{|\alpha|}\,x^0,\qquad 
\xi = \sqrt{|\beta|}\,x,\qquad 
\eta = \sqrt{|\gamma|}\,y,\qquad 
\zeta = \sqrt{|\delta|}\,z.
\]
The metric then becomes the standard Minkowski metric
\[
g = d\chi \otimes d\chi - d\xi \otimes d\xi - d\eta \otimes d\eta - d\zeta \otimes d\zeta,
\]
with signature $(+1,-1,-1,-1)$, which is invariant under such transformations~\cite{greub1978linear, geroch1985mathematical}.

\medskip
\noindent \textbf{Deformation of the Heisenberg algebra.}
The spatial coefficients $\beta,\gamma,\delta$ serve as deformation parameters. Using the rescaled spatial coordinates
\[
\xi = \sqrt{|\beta|}\,x,\qquad \eta = \sqrt{|\gamma|}\,y,\qquad \zeta = \sqrt{|\delta|}\,z,
\]
and their conjugate momenta $\hat{p}_x,\hat{p}_y,\hat{p}_z$, we obtain the deformed commutation relations
\begin{align}
\xi \,\hat{p}_x - \hat{p}_x \,\xi &= i\sqrt{|\beta|},\\
\eta \,\hat{p}_y - \hat{p}_y \,\eta &= i\sqrt{|\gamma|},\\
\zeta \,\hat{p}_z - \hat{p}_z \,\zeta &= i\sqrt{|\delta|}.
\end{align}
When $|\beta|=|\gamma|=|\delta|=1$, the standard Heisenberg algebra
\[
[\xi,\hat{p}_x]=i,\qquad [\eta,\hat{p}_y]=i,\qquad [\zeta,\hat{p}_z]=i
\]
is recovered.

\medskip
\noindent \textbf{Associated Dirac operator.}
The Dirac operator in the deformed coordinates is given by
\[
D_{st} = \gamma_{\chi} \frac{\partial}{\partial \chi} - \gamma_{\xi} \frac{\partial}{\partial \xi} - \gamma_{\eta} \frac{\partial}{\partial \eta} - \gamma_{\zeta} \frac{\partial}{\partial \zeta},
\]
where the deformed gamma matrices are defined as
\[
\gamma_{\chi} = \sqrt{|\alpha|}\,\gamma^0,\quad 
\gamma_{\xi} = \sqrt{|\beta|}\,\gamma^{x},\quad 
\gamma_{\eta} = \sqrt{|\gamma|}\,\gamma^{y},\quad 
\gamma_{\zeta} = \sqrt{|\delta|}\,\gamma^{z}.
\]
The metric components in these coordinates are $g^{00}=1$, $g^{11}=g^{22}=g^{33}=-1$, which correspond to the Lorentzian signature. This construction illustrates how the metric coefficients directly enter both the commutation relations and the Dirac operator, bridging spacetime geometry with deformed quantum algebras \cite{connes1994noncommutative, douglas2001noncommutative, seiberg1999string}.

\medskip
\noindent \textbf{Connection with $M_1$ and $M_2$.}
As will be seen below, this construction is a particular case of the more general metric-deformed Heisenberg algebras $M_1$ and $M_2$ introduced in Definition~\ref{eq:first} and \ref{eq:second}.
\end{example}

\section{Conclusions and Outlook}\label{sec:conclusions} 

In this work, we have introduced metric-deformed Heisenberg algebras $M_1$ and $M_2$, where the commutation relations are expressed directly in terms of the components of a diagonal Lorentzian metric. We have shown that these algebras unify several known $q$-deformed Heisenberg algebras, including the $q$-$\hbar$ Heisenberg algebra \cite{Volovich-Arefeva-91}, the new $q$-Heisenberg algebra \cite{jaramillo2025new}, and the $q$-generalized Heisenberg algebra \cite{Razavinia-Lopes2022}. This unification demonstrates that the metric components $g^{\mu\nu}$ serve as natural deformation parameters for the Heisenberg algebra, providing a geometric interpretation of $q$-deformations within the framework of special relativity. Using Sylvester's theorem of inertia, we established a direct connection between the metric signature and the deformation parameters. The diagonalization procedure guarantees that any Lorentzian metric can be reduced to a form with entries $\pm 1$, and the coefficients that appear during this process become the deformation parameters in the commutation relations of $M_1$ and $M_2$. This construction reveals a deep link between the algebraic structure of quantum mechanics and the geometry of spacetime. Furthermore, we constructed a $q$-Dirac operator $D_q$ from the deformed D'Alembertian and proved the factorization property $D_q^2 = \square_q \mathbf{1}_4$, establishing a direct link between the $q$-deformed Dirac equation and the $q$-deformed Klein–Gordon equation. This construction is compatible with our previous work on quadratic $q$-difference operators \cite{Jaramillo2024}, where we developed the analytic tools for $q$-derivatives on quadratic relativistic invariants. The three explicit realizations presented in Section~\ref{sec:examples} illustrate how the general $q$-Dirac operator specializes to different cases depending on the underlying $q$-Heisenberg algebra.

Several avenues for further research emerge from this work. First, the representation theory of $M_1$ and $M_2$ on $q$-deformed Hilbert spaces remains to be explored, generalizing the work of \cite{schmudgen-09, Silvestrov-Hellstrom-00}. Second, the $q$-deformed Dirac equation $D_q \psi = m \psi$ can be solved using $q$-exponential functions, which would yield a $q$-deformed dispersion relation with potential phenomenological implications. Third, the $q$-Dirac operator can be extended to Clifford algebra-valued functions, following the approach of \cite{Jaramillo2024}, which would allow for a full $q$-deformed Clifford analysis. Fourth, a $q$-deformed version of the relativistic Maxwell electrodynamic algebra proposed in \cite{Jaramillo2024} could be developed. Fifth, the physical implications of identifying the deformation parameter $q$ with metric components may lead to testable predictions for quantum gravity effects. Finally, it would be interesting to investigate whether the triple $(\mathcal{A}, \mathcal{H}, D_q)$ forms a $q$-deformed spectral triple in the sense of Connes' noncommutative geometry.

\section*{Statements and Declarations}

\noindent \textbf{Competing Interests:} The authors declare no competing interests.

\noindent \textbf{Data Availability:} No new data were created or analyzed in this study. Data sharing is not applicable to this article.

\end{document}